\documentclass[12pt]{article}

\usepackage[left=2.5cm, right=2.5cm, top=2.5cm, bottom=2.5cm]{geometry}
\usepackage{amssymb,amsthm,amsmath}
\usepackage{braket}
\usepackage{xcolor}
\usepackage{algorithm}
\usepackage[noend]{algpseudocode}
\usepackage{authblk}

\usepackage[final=true,colorlinks = true,allcolors = {blue}]{hyperref}

\newtheorem{theorem}{Theorem}
\newtheorem{proposition}{Proposition}
\newtheorem{lemma}{Lemma}
\newtheorem{corollary}{Corollary}

\newcommand{\V}{\mathcal{V}}
\renewcommand{\S}{\mathcal{S}}
\newcommand{\E}{\mathcal{E}}

\newcommand{\tO}{\widetilde{O}}
\newcommand{\Id}{\mathbb{I}}
\newcommand{\Hi}{\mathcal{H}}
\newcommand{\complex}{\mathbb{C}}

\begin{document}

\title{Quantum Walk Sampling by Growing Seed Sets}
\author[ ]{Simon Apers}
\affil[ ]{Inria, France and CWI, the Netherlands}
\affil[ ]{\tt simon.apers@inria.fr}
\setcounter{Maxaffil}{0}
\renewcommand\Affilfont{\itshape\small}

\begin{titlepage}
\clearpage\maketitle
\thispagestyle{empty}

\abstract{
This work describes a new algorithm for creating a superposition over the edge set of a graph, encoding a quantum sample of the random walk stationary distribution.
The algorithm requires a number of quantum walk steps scaling as $\tO(m^{1/3} \delta^{-1/3})$, with $m$ the number of edges and $\delta$ the random walk spectral gap.
This improves on existing strategies by initially growing a classical seed set in the graph, from which a quantum walk is then run.

The algorithm leads to a number of improvements: (i) it provides a new bound on the setup cost of quantum walk search algorithms, (ii) it yields a new algorithm for $st$-connectivity, and (iii) it allows to create a superposition over the isomorphisms of an $n$-node graph in time $\widetilde{O}(2^{n/3})$, surpassing the $\Omega(2^{n/2})$ barrier set by index erasure.}

\end{titlepage}

\section{Introduction and Summary}

Sampling from the stationary distribution of a random walk is a common and valuable tool in the design of algorithms \cite{sinclair2012algorithms}.
It underlies the Markov chain Monte Carlo paradigm, and plays a central role in a wide range of approximation algorithms for graph problems.
In this work we investigate the quantum counterpart of this task - generating quantum samples from the random walk stationary distribution.
Given query access to some graph $G = (\V,\E)$ with $m$ edges, we wish to create the quantum state
\begin{equation} \label{eq:qsample}
\ket{\pi}
= \frac{1}{\sqrt{m}} \sum_{(i,j)\in\E} \ket{i,j},
\end{equation}
which is a superposition over the edges of the graph.
Measuring the first register of this state, and discarding the second register, indeed returns the random walk stationary distribution.
Creating such a quantum sample of a classical stationary distribution forms a crucial primitive for a range of algorithms: the so-called ``setup cost'' in quantum walk search algorithms \cite{magniez2011search,krovi2016quantum} refers to the cost of generating a state such as $\ket{\pi}$, quantum algorithms for speeding up MCMC \cite{aharonov2003adiabatic,somma2008quantum,wocjan2008speedup,poulin2009sampling} build on the possibility of efficiently creating quantum samples, and a number of quantum algorithms for solving graph problems \cite{watrous2001quantum,jarret2018quantum} require the generation of a superposition over the edges of a graph.

We develop a new quantum algorithm for creating the quantum sample \eqref{eq:qsample}, given only local query access to the graph.
Our algorithm improves the query and time complexity of the folklore approach to quantum sampling from $\tO(m^{1/2}\delta^{-1/2})$ to $\tO(m^{1/3}\delta^{-1/3})$.
We do so by growing a classical seed set from the initial node.
This incurs a payoff in the space complexity, increasing it from $\tO(1)$ to $\tO(m^{1/3} \delta^{-1/3})$.
As a demonstration of our algorithm, we discuss a new approach to solving $st$-connectivity: generate a superposition over the connected components of $s$ and $t$, and compare these states.
This approach generalizes the notorious quantum state generation strategy for solving graph isomorphism.
Concerning the latter, we show that our algorithm allows to create a superposition over the isomorphisms of a given $n$-vertex input graph in $\tO(2^{n/3})$ steps.
This surpasses the $\Omega(2^{n/2})$ index erasure barrier by Ambainis et al \cite{ambainis2011symmetry}.
In a similar way we can create a superposition over the elements of a black box group in $\tO(2^{n/3})$ steps, where $2^n$ is the number of group elements.

\vspace{5mm}\noindent
\textit{\textbf{Query Model.}}\hspace{5mm}
We assume throughout this work that we only have ``local'' query access to some graph $G = (\V,\E)$: we are give an initial node $j \in \V$, and we can query for its degree and neighbors.
Such queries fall under the so-called \textit{adjacency array model} \cite{durr2006quantum} or \textit{bounded degree model} \cite{goldreich1997property} (although we do not assume the degree is bounded), which is very natural when studying random walk algorithms.
However, departing from these models, and justifying the term ``local'', we will not assume direct access to or prior knowledge about $\V$, apart from the initial node.
For comparison, in \cite{durr2006quantum} the node set $\V$ is given as a list, and in \cite{goldreich1997property} access to uniformly random nodes is assumed.
In this sense our work is in line with graph exploring algorithms as considered e.g.~in \cite{spielman2013local}, or more recently in \cite{chiericetti2016sampling}.

Since our algorithm strongly builds on the use of quantum walks, we will alternatively express the complexity of our results as a function of the number of quantum walk steps.
Also in such case the denominator ``local'' query access is justified, since a single quantum walk step from a certain node only accesses the neighbors of that node.

\vspace{5mm}\noindent
\textit{\textbf{Quantum Walk Sampling Algorithm.}}\hspace{5mm}
Our algorithm builds on the folklore approach to creating $\ket{\pi}$, discussed in e.g.~\cite{richter2007quantum,wocjan2008speedup,poulin2009sampling,orsucci2015faster}.
Starting from some initial state $\ket{j}$ localized on a node $j \in \V$, this approach combines quantum phase estimation and amplitude amplification on the quantum walk operator associated to the graph.
We detail this scheme in Section \ref{sec:walks}.
The scheme requires $\tO(m^{1/2}\delta^{-1/2})$ QW steps on the graph, where $\delta$ is the random walk spectral gap, and the factor $m^{1/2}$ stems from the small projection of the initial state onto $\ket{\pi}$.

In the present work we improve on this scheme by initially doing some ``classical work'': we first use classical means to grow a seed set around the initial vertex.
Briefly ignoring the $\delta$-dependency, we grow the set to have size $\Theta(m^{1/3})$.
We can then use a QRAM data structure to generate and reflect around a quantum superposition over this set, which now has a $\Omega(m^{-1/3})$ overlap with the target state.
Reinvoking the folklore scheme from this state then allows to retrieve $\ket{\pi}$, now only requiring $\tO(m^{1/3})$ queries.
This approach leads to the following result.
\begin{theorem}
Given a lower bound $\gamma \leq \delta$ on the spectral gap, it is possible to create the quantum state $\ket{\pi}$ using $\tO(m^{1/3}\gamma^{-1/3})$ time, space and QW steps.
\end{theorem}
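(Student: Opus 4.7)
The plan is to balance a classical seed-growing step with the quantum phase-estimation/amplitude-amplification routine so that both stages cost roughly $\tO((m/\gamma)^{1/3})$. Concretely, I set $k = (m/\gamma)^{1/3}$ and aim to produce a classical set $E_S \subseteq \E$ of $\Theta(k)$ known edges around the initial node $j$, then feed a uniform superposition over $E_S$ into the folklore scheme of Section~\ref{sec:walks} as a warm start.

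\textbf{Step 1 (Classical seed growth).} Starting from $j$, I run a breadth-first exploration using local degree and neighbor queries, halting once $\Theta(k)$ directed edges have been explored (interrupting mid-vertex if a high-degree vertex is encountered). The explored edges form a set $E_S \subseteq \E$ of size exactly $\Theta(k)$, which I write into a QRAM-accessible data structure. Classical time and space are both $\tO(k) = \tO((m/\gamma)^{1/3})$.

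\textbf{Step 2 (Warm-start quantum sampling).} Using the QRAM I coherently prepare the uniform superposition $\ket{E_S} = |E_S|^{-1/2}\sum_{e\in E_S} \ket{e}$ and the reflection around it, both in $\tO(1)$ quantum time. Since $\ket{\pi}$ and $\ket{E_S}$ are uniform on their supports $\E \supseteq E_S$, the overlap is $\braket{\pi|E_S} = \sqrt{|E_S|/m} = \Theta(\sqrt{k/m})$. Feeding $\ket{E_S}$ into the folklore scheme — phase estimation on the quantum walk operator at precision $O(\sqrt{\gamma})$, followed by amplitude amplification with the $\ket{E_S}$-reflection — then requires $\tO(1/\sqrt{\gamma})$ QW steps per phase-estimation call and $O(\sqrt{m/k})$ amplification rounds, for a total quantum cost of $\tO(\sqrt{m/(k\gamma)}) = \tO((m/\gamma)^{1/3})$. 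Combined with Step~1 this yields the claimed bounds on time, space, and QW steps.

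\textbf{Main obstacle.} The main subtlety is to verify that the folklore scheme, usually phrased with a vertex-localised initial state $\ket{j}$, still works when warm-started from an arbitrary state with $\Omega(\sqrt{k/m})$ overlap on $\ket{\pi}$. Concretely, I must check that phase estimation on the walk unitary continues to distinguish $\ket{\pi}$ from its orthogonal complement with a spectral gap of $\Omega(\sqrt{\gamma})$ independently of the starting state, and that amplitude amplification around $\ket{E_S}$ composes cleanly with the phase-estimation oracle (absorbing only $\mathrm{polylog}$ overhead). A secondary technical point is that the BFS exploration must be halted exactly when $\Theta(k)$ directed edges have been queried — not when $k$ vertices have been added — so that $E_S$ really is a size-$\Theta(k)$ subset of $\E$ and $\ket{E_S}$ really is uniform on a subset of the support of $\ket{\pi}$, which is what guarantees the overlap calculation above.
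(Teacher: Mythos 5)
Your overall architecture --- a BFS seed of total size $\Theta(m^{1/3}\gamma^{-1/3})$, the Kerenidis--Prakash QRAM to prepare and reflect about the seed state in polylog time, then phase estimation plus amplitude amplification balanced so both stages cost $\tO((m/\gamma)^{1/3})$ --- is exactly the paper's, and your overlap and cost arithmetic match. However, there is a genuine gap at precisely the point you flag as the ``main obstacle'', and your proposed resolution goes the wrong way. The folklore routine does \emph{not} work ``independently of the starting state'': phase estimation on $W$ followed by post-selecting phase $0$ isolates $\ket{\pi}$ only for initial states lying in $\mathrm{span}\{\ket{\psi_i}\}$, because Lemma \ref{lem:QW-gap} asserts uniqueness of the eigenvalue-$1$ eigenvector \emph{only inside that subspace}; $W = SR_\E$ has additional eigenvalue-$1$ eigenvectors outside it (for instance antisymmetric, divergence-free ``circulation'' states, which exist whenever the graph contains a cycle). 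Your warm-start state $\ket{E_S}$ is uniform over the explored directed edges, and since you explicitly allow halting ``mid-vertex'' it contains partial neighborhoods, so it generally has an $\Omega(1)$ component outside $\mathrm{span}\{\ket{\psi_i}\}$, part of which can lie in that extra phase-$0$ space. Amplitude amplification then drives you to the normalized phase-$0$ projection of $\ket{E_S}$, in which the $\ket{\pi}$-component has weight only $\Theta(\sqrt{k/m})$, so the output need not be close to $\ket{\pi}$ at all. The fix --- and what the paper does --- is to take the seed to be a \emph{vertex} set $\S$ and warm-start from $\ket{\S} = d(\S)^{-1/2}\sum_{(i,j)\in\E(\S)}\ket{i,j}$, the superposition over \emph{all} edges leaving $\S$; this state lies in $\mathrm{span}\{\ket{\psi_i}\}$, Proposition \ref{prop:folk} is stated exactly for such states, and since $d(\S) \ge |E_S|$ the overlap bound you computed only improves.

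A secondary issue: you set $k = (m/\gamma)^{1/3}$, which presupposes knowledge of $m$, but the access model here is purely local (one starting vertex, degree and neighbor queries), so $m$ is not available in advance. The paper handles this by running the whole procedure for a geometrically growing guess $M = 1,2,4,\dots$, capping the subroutine at $\widetilde{\Theta}(M^{1/3}\gamma^{-1/3})$ steps per round and stopping once the amplification succeeds; Proposition \ref{prop:folk} guarantees constant success probability once $M \gtrsim m$, and the geometric sum keeps the expected total at $\tO(m^{1/3}\gamma^{-1/3})$. If $m$ were given your parameter choice is fine, but as stated your algorithm is not implementable in the paper's model, and this doubling loop (together with the ``expected'' qualifier it forces on the complexity) is the remaining ingredient you would need to add.
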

Apart from the log-factors, the combined dependency on $m$ and $\delta$ is optimal.
Indeed it is tight on e.g.~the cycle graph, which has $m = n$ and $\delta = n^{-2}$, giving an $\tO(n)$ steps algorithm.
Since the diameter of the cycle is $\Omega(n)$, this is optimal when assuming local query access.
We also note that, if in addition we are given a bound $D \geq d_M$ on the maximum degree (in e.g.~the array model this is always given), then we can implement our algorithm using $\tO(m^{1/3}\gamma^{-1/3}D^{1/3})$ degree and neighbor queries.

The algorithm gives a direct bound on the so-called setup cost of quantum walk search algorithms in the MNRS framework \cite{magniez2011search} as a function of the update cost (i.e., the cost of implementing a quantum walk step).
The increased space complexity of our algorithm, $\tO(m^{1/3}\delta^{-1/3})$ as compared to $\tO(1)$ for the folklore approach, is very similar to the payoff in space versus time or query complexity in the collision finding algorithm of Brassard et al \cite{brassard1997quantum} and the element distinctness algorithm of Ambainis \cite{ambainis2007quantum}.

\vspace{5mm}\noindent
\textit{\textbf{Application to $st$-connectivity.}}\hspace{5mm}
Our QW sampling algorithm yields a new approach for solving $st$-connectivity, somewhat similar to the approach taken by Watrous in \cite{watrous2001quantum}: generate a superposition over the edges in the connected components of $s$ resp.~$t$, and compare the resulting states.
As we prove in Proposition \ref{prop:st}, this requires $\tO(m^{1/3} \gamma^{-1/3})$ QW steps, where $\gamma$ is a lower bound on the spectral gaps of the connected components of $s$ and $t$.
Our algorithm outperforms the existing quantum algorithms for $st$-connectivity \cite{durr2006quantum,belovs2012span,belovs2013quantum,jarret2018quantum} on for instance sparse graphs with a good spectral gap.

The approach generalizes a well-known strategy to solving graph isomorphism on a quantum computer \cite{aharonov2003adiabatic} (called ``component mixing'' in \cite{lutomirski2011component}): generate superpositions over the isomorphisms of each graph, and compare the resulting states.
In \cite{ambainis2011symmetry}, Ambainis et al aimed to prove a lower bound on this approach by abstracting it to the so-called index erasure problem.
For this generalized problem, they prove a lower bound of $\Omega(2^{n/2})$.
They argue that the same bound holds for creating a superposition over graph isomorphisms, be it under the condition that the algorithm makes no use of the structure of the problem.
We show that, by exploiting the structure of the problem, we can indeed use our quantum walk sampling algorithm to surpass this bound.
Thereto we consider the graph whose node set consists of isomorphisms of the input graph, and whose edge set arises from performing pairwise transpositions on the nodes (i.e., on the adjacency matrices of the isomorphisms).
Using our quantum walk sampling algorithm on this graph yields the following corollary.
\begin{corollary}
Given an $n$-node input graph $g$, it is possible to create a superposition over the isomorphisms of $g$ in $\tO(2^{n/3})$ steps.
\end{corollary}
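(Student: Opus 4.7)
The plan is to invoke Theorem 1 on the auxiliary graph $H$ described in the lead-in to the corollary: the vertex set is the orbit of $g$ under the $S_n$-relabeling action (the set of labeled graphs isomorphic to $g$), and two vertices $A, A'$ are adjacent whenever $A' = \tau\cdot A$ for some transposition $\tau$, where $\tau\cdot A$ denotes the simultaneous permutation of rows and columns of $A$. Because the transpositions generate $S_n$, $H$ is connected and its simple random walk has the uniform stationary distribution on $V(H)$, so a quantum sample of that distribution is precisely the superposition over isomorphisms that the corollary asks for.

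To apply Theorem 1, I need local query access, an edge count, and a spectral gap bound. Local queries are cheap: a vertex is an $n\times n$ adjacency matrix, its degree is the constant $\binom{n}{2}$ (collisions are absorbed as self-loops, keeping $H$ regular), and the $\tau$-th neighbor is produced in $\mathrm{poly}(n)$ time, so one quantum walk step costs $\mathrm{poly}(n)$ classical work. The edge count is $m = \Theta(|V(H)|\cdot n^2)$. For the spectral gap, the walk on $H$ is the projection of the transposition walk on $S_n$ to the coset space $S_n/\mathrm{Aut}(g)$; Diaconis--Shahshahani gives gap $\Theta(1/n)$ on $S_n$, and projecting to a lumpable quotient cannot decrease it, so $\delta_H = \Omega(1/n)$. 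Feeding $\gamma = \Omega(1/n)$ into Theorem 1 yields $\tO(m^{1/3}\gamma^{-1/3}) = \tO(|V(H)|^{1/3})$ quantum walk steps after absorbing $\mathrm{poly}(n)$ factors into $\tO$, which matches the stated $\tO(2^{n/3})$ bound once $2^n$ is read as an upper bound on the orbit size, following the same convention as the black-box group statement.

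The output state $\ket{\pi_H}$ of Theorem 1 is a superposition over the edges of $H$, but the corollary asks for one over vertices; a one-line uncomputation closes the gap. Since the edge between $A$ and $A' = \tau\cdot A$ is labeled by the fixed Cayley generator $\tau$, the map $\ket{A, A'}\mapsto \ket{A,\tau}$ disentangles the second register, and the resulting state factors as $\tfrac{1}{\sqrt{|V(H)|}}\sum_A \ket{A}$ tensored with a fixed ancilla $\ket{+}^{\otimes O(\log n)}$ that can then be discarded. The main obstacle I anticipate is justifying the spectral-gap claim on the quotient chain: one needs to check that the transposition generators descend to a well-defined, reversible, lumpable chain on $S_n/\mathrm{Aut}(g)$ before the monotonicity of the gap under lumping can be invoked. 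A robust fallback is a comparison argument in the style of Diaconis--Saloff-Coste, transferring the $\Omega(1/n)$ gap from $S_n$ directly to the orbit walk, avoiding any lumpability hypothesis.
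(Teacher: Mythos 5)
Your proposal follows essentially the same route as the paper: build the transposition graph on the orbit of $g$, transfer the Diaconis--Shahshahani gap from $S_n$ to the orbit walk, implement a walk step in $\mathrm{poly}(n)$ time, and invoke the sampling theorem; your bookkeeping of the bound as $\tO(|\V|^{1/3}\,\mathrm{poly}(n))$ written as $\tO(2^{n/3})$ matches the paper's own (loose) convention. Two points of comparison. For the gap, the paper argues via Aldous--Fill that the walk on the contracted graph is an induced chain of the random-transposition walk and so inherits its gap, getting $\delta \in \Omega(n^{-1}\log^{-1}n)$ from the $O(n\log n)$ mixing time; your quotient/lumping argument (with a Diaconis--Saloff-Coste comparison as fallback) is a legitimate alternative, and lumpability does hold since left multiplication by a transposition permutes the right cosets of $\mathrm{Aut}(g)$. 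The one place where your writeup needs the paper's fix is the multi-edge issue: when $\mathrm{Aut}(g)$ is nontrivial, distinct transpositions $\tau_1 \neq \tau_2$ can satisfy $\tau_1\cdot A = \tau_2 \cdot A$, so the pair $(A,A')$ does not determine $\tau$ and your disentangling map $\ket{A,A'} \mapsto \ket{A,\tau}$ is not well-defined; moreover the edge-space walk of Section 2.2 does not directly accommodate parallel edges or self-loops. The paper resolves exactly this by running the walk on a node+coin space $\mathrm{span}_\complex\{\ket{\sigma(g),i,j}\}$, citing Krovi et al.\ for the fact that the spectral correspondence of Lemma 1 is unaffected; there the coin register carries the transposition label throughout, the final state is of product form (node register tensor uniform coin), and the coin is simply discarded to get the node superposition. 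Since your query model already indexes neighbors by transpositions, you are implicitly in this coin picture and the repair is immediate, but the final uncomputation step as written is the only genuine gap.
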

\noindent
Completing the associated $st$-connectivity algorithm, we find an $\tO(2^{n/3})$ quantum algorithm for graph isomorphism.
Using the existing quantum algorithms for $st$-connectivity, this approach would require $\Omega(2^{n/2})$ steps.
Clearly the improved performance still falls terribly short of current (classical) algorithms for graph isomorphism, most notably the quasi-polynomial algorithm by Babai \cite{babai2016graph}, yet it provides a clear demonstration of how the readily accessible structure of the problem allows to surpass the index erasure bound.

A similar strategy exists for solving the group non-membership problem on a quantum computer, as proposed by Watrous \cite{watrous2000succinct}, requiring to generate a superposition over the elements of a finite black box group.
Using the random walk algorithm by Babai \cite{babai1991local} for generating uniformly random group elements, we can similarly generate this superposition in $\tO(2^{n/3})$ steps, when $2^n$ is the number of group elements.

\vspace{5mm}\noindent
\textit{\textbf{Open Questions.}}\hspace{5mm}
This work leaves open a number of questions and possible applications, some of which we summarize below:

\begin{itemize}
\item
\textit{Quantum sampling for general Markov chains or stoquastic Hamiltonians.}
In this work we only consider the quantum sampling problem for random walks.
Generalizing our approach to more general Markov chains could lead to improvements on quantum MCMC algorithms \cite{aharonov2003adiabatic,somma2008quantum}, or the preparation of many body ground states \cite{poulin2009sampling} and Gibbs states \cite{van2017quantum}.
The main bottleneck to such generalization seems to be the classical construction of seed sets which have an appropriate overlap with the goal quantum state.
Even more generally, one could consider the preparation of ground states of arbitrary Hamiltonians.
For e.g.~the special case of stoquastic Hamiltonians, which are known to have a nonnegative ground state, it should be possible to construct a seed set with improved overlap with the ground state.

\item
\textit{Faster quantum fast-forwarding.}
In \cite{apers2018quantum} a quantum algorithm was proposed for quantum sampling a $t$-step Markov chain.
If the Markov chain has transition matrix $P$, and is started from a node $i$, the algorithm has complexity $\tO(\|P^t\ket{i}\|^{-1}\, t^{1/2}) \in \tO(m^{1/2}\, t^{1/2})$.
Using ideas from the present work, it seems very feasible that we can improve this complexity to $\tO(\|P^t\ket{i}\|^{-2/3}\, t^{1/2}) \in \tO(m^{1/3}\, t^{1/2})$.
Rather than using a breadth-first search to grow the seed set, as in the present work, it seems more suitable to use random walk techniques as in \cite{spielman2013local,andersen2009finding}.
As a byproduct, this would yield an improved quantum expansion tester, combining the speedups of \cite{ambainis2011quantum} and \cite{apers2018quantum}.

\item
\textit{Quantum search in $\sqrt{HT}$.}
Our algorithm does not suffer from the so-called ``symmetry barrier'' in quantum algorithms: we can go from $\ket{j}$ to $\ket{\pi}$ more easily then from $\ket{\pi}$ to $\ket{j}$.
Indeed, if for instance the underlying graph is an expander, then the former takes $O(n^{1/3})$ queries, whereas the latter takes $\Omega(n^{1/2})$ queries by the search lower bound.

An open problem related to this is the following: given an initial node $s$ in a graph, can we find a node $t$ in $O(HT_{s,t}^{1/2})$ QW steps, with $HT_{s,t}$ the hitting time from $s$ to $t$?
Currently the best algorithm for this problem is by Belovs \cite{belovs2013quantum}, which solves it in $O(CT_{s,t}^{1/2})$, with $CT_{s,t} = H_{s,t} + H_{t,s}$ the commute time between $s$ and $t$.
Since the commute time is symmetric between $s$ and $t$, this obeys the aforementioned symmetry barrier.
However, the commute time can be much larger than the hitting time from $s$ to $t$, hence the open question of whether we can improve this performance to $O(HT_{s,t}^{1/2})$. thereby necessarily breaking this symmetry e.g.~by using our techniques.
\end{itemize}

\vspace{5mm}\noindent
\textit{\textbf{Outline.}}\hspace{5mm}
In Section \ref{sec:prelim} we discuss the graph and query model (Section \ref{sec:query}), and provide the necessarily preliminaries on random walks and quantum walks (Section \ref{sec:walks}).
In Section \ref{sec:QW-sampling} we propose an algorithm for growing a classical seed set (Section \ref{sec:BFS}), we discuss the QRAM data structure (Section \ref{sec:QRAM}), and we propose our QW sampling algorithm (Section \ref{sec:QW-alg}).
Finally in Section \ref{sec:app-st} we discuss the application of our QW sampling algorithm for solving $st$-connectivity (Section \ref{sec:general-st}), and we demonstrate it for the special case of graph isomorphism testing (Section \ref{sec:graph-iso}).

\section{Preliminaries: Queries and Walks} \label{sec:prelim}

\subsection{Graph and Query Model} \label{sec:query}
Throughout the paper we assume local query access to an undirected graph $G = (\V,\E)$, with $\E$ a subset of the ordered pairs $\V \times \V$, such that $(i,j) \in \E \Leftrightarrow (j,i) \in \E$.
We denote $|\V| = n$ and $|\E| = m$.
For any $\S \subseteq \V$, we let $\E(\S)$ denote the set of edges starting in $\S$, i.e.,
\[
\E(\S) = \{(i,j)\in\E \mid i \in \S\}.
\]
For any $i \in \V$, we let $d(i) = |\E(\{i\})|$ denote the degree of $i$, the maximum degree $d_M = \max_{i\in\V} d(i)$, and $d(\S) = |\E(\S)| = \sum_{i\in\S} d(i)$ denotes the total degree of a set $\S \subseteq \V$.
A single query consists of either of the following:
\begin{itemize}
\item
\textit{degree query}: given $i \in \V$, return degree $d(i)$
\item
\textit{neighbor query:} given $i \in \V$, $k \in [d(i)]$, return $k$-th neighbor of $i$
\end{itemize}

As an alternative query model we will also consider the quantum walk model, or so-called MNRS framework, as proposed in \cite{magniez2011search} in the context of quantum walk search.
The model associates abstract costs to different operations\footnote{They actually consider a more general model, associated to a reversible Markov chain over $G$. We consider the special case where the Markov chain is a random walk.}:
\begin{itemize}
\item
\textit{setup cost:}
the cost of preparing the quantum sample $\ket{\pi} = m^{-1/2} \sum_{(i,j)\in\E} \ket{i,j}$
\item
\textit{update cost:}
the cost of implementing a quantum walk step.
See Section \ref{sec:walks} for details.
\end{itemize}
For search problems an additional \textit{checking cost} is considered, yet this will not be relevant here.
In \cite{cade2016time} it is proven that the update cost or quantum walk step for a node $i$ can be simulated using $O(d(i)^{1/2})$ degree and neighbor queries.
From our work it follows that the setup cost can be simulated using $\tO(m^{1/3} \delta^{-1/3})$ QW steps, or $\tO(m^{1/3} d_M^{1/3} \delta^{-1/3})$ degree and neighbor queries.

\subsection{Random Walks and Quantum Walks} \label{sec:walks}
From some initial seed vertex $j \in \V$, we can use degree and neighbor queries to implement a random walk over $\V$.
The transition matrix $P$ describing such a walk is defined by $P(i,j) = 1/d(i)$ if $(i,j) \in \E$, and $P(i,j) = 0$ elsewhere.
If the graph is connected and nonbipartite, then the random walk converges to its stationary distribution $\pi$, defined by $\pi(i) = d(i)/m$ for any $i \in \V$.
If we order the eigenvalues of $P$ (with multiplicities) as $1 = \lambda_1 \geq \lambda_2 \geq \dots \geq \lambda_n \geq -1$, then the rate at which the walk converges to $\pi$ is bounded by the spectral gap $\delta = 1 - \max\{|\lambda_2|,|\lambda_n|\}$ \cite{levin2017markov}.

Quantum walks (QWs) form an elegant and nontrivial quantum counterpart to random walks on graphs.
Following the exposition in \cite{magniez2011search}, they are naturally defined over a vector space associated to the edge set
\[
\Hi_\E
= \mathrm{span}_\complex\big\{ \ket{i,j} \mid i,j \in \V \big\}.
\]
A quantum walk over $\Hi_\E$ is now defined as the unitary operator $W = S R_\E$, where $R_\E$ is a reflection around the subspace $\mathrm{span}_\complex\{\ket{\psi_i} \mid i \in \V \}$, with
\[
\ket{\psi_i}
= \frac{1}{\sqrt{d(i)}} \sum_{(i,j)\in\E} \ket{i,j},
\]
and $S$ represents the swap operator defined by $S\ket{i,j} = \ket{j,i}$.
The cost of implementing the QW operator $W$ is called the update cost, but can alternatively be implemented using $O(d_M^{1/2})$ degree and neighbor queries, and $\tO(1)$ elementary operations.

The spectrum of $W$ is carefully tied to the spectrum of the original random walk matrix $P$, as was seminally proven by Szegedy in \cite{szegedy2004quantum} and Magniez et al in \cite{magniez2011search}.
For the purpose of this work, we abstract the following lemma.
We say that $W$ has a phase gap $\Delta$ if for every eigenvalue $e^{i\theta} \neq 1$ of $W$ it holds that $|\theta| \geq \Delta$.
We also recall the state $\ket{\pi} = m^{-1/2} \sum_{(i,j)\in\E} \ket{i,j}$.
\begin{lemma}[\cite{szegedy2004quantum,magniez2011search}] \label{lem:QW-gap}
Let $P$ be the random walk transition matrix having spectral gap $\delta$.
Then the quantum walk operator $W$ has a phase gap $\Delta \in \Theta(\sqrt{\delta})$, and $\ket{\pi}$ is the unique eigenvalue-1 eigenvector in the subspace $\mathrm{span}_\complex\{\ket{\psi_i} \mid i \in \V\}$.
\end{lemma}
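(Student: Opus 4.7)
My plan is to recover this classical result via Szegedy's correspondence: reduce the spectrum of the unitary $W = S R_\E$ on its invariant subspace $\mathcal{A} + S\mathcal{A}$ to that of a symmetric matrix similar to $P$. First, I would introduce the isometry $A : \complex^{|\V|} \to \Hi_\E$ defined by $A\ket{i} = \ket{\psi_i}$, so that $AA^\dagger$ is the projector onto $\mathcal{A} := \mathrm{span}_\complex\{\ket{\psi_i} \mid i \in \V\}$ and $R_\E = 2AA^\dagger - \Id$. The central object is the discriminant matrix $M = A^\dagger S A$. Expanding $\ket{\psi_i} = d(i)^{-1/2} \sum_{(i,k)\in\E} \ket{i,k}$ and using $S\ket{i,k} = \ket{k,i}$, one computes $M(i,j) = 1/\sqrt{d(i)\,d(j)}$ when $(i,j)\in\E$ and $0$ otherwise, so $M = D^{1/2} P D^{-1/2}$ with $D = \mathrm{diag}(d(i))$. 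Hence $M$ is symmetric and isospectral to $P$.

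Next, I would analyze $W$ block-by-block on $\mathcal{A} + S\mathcal{A}$ via Jordan's lemma. For each eigenvector $\ket{v}$ of $M$ with eigenvalue $\lambda$, the two-dimensional span of $A\ket{v}$ and $SA\ket{v}$ is $W$-invariant, and an explicit $2\times 2$ matrix computation shows that $W$ acts on this block with eigenvalues $e^{\pm i \theta}$ satisfying $\cos\theta = \lambda$. Applied to the Perron eigenvalue $\lambda = 1$ with eigenvector $\sqrt{\pi} = (\sqrt{d(i)/m})_{i\in\V}$, this yields $A\sqrt{\pi} = \ket{\pi}$ as an eigenvalue-$1$ eigenvector in $\mathcal{A}$. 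For every other eigenvalue of $M$ we have $|\lambda| \leq 1 - \delta$, hence $|\theta| \geq \arccos(1-\delta) \in \Theta(\sqrt{\delta})$ (using $\arccos(1-x) \sim \sqrt{2x}$ as $x \to 0$), giving the claimed phase gap.

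For uniqueness of $\ket{\pi}$ as an eigenvalue-$1$ eigenvector in $\mathcal{A}$, I would observe that on the orthogonal complement $(\mathcal{A} + S\mathcal{A})^\perp$ the reflection $R_\E$ acts as $-\Id$, so $W = -S$ there, with its $\pm 1$ eigenvectors living outside $\mathcal{A}$; inside $\mathcal{A} + S\mathcal{A}$ the block decomposition produces exactly one eigenvalue-$1$ eigenvector of $W$ per eigenvalue-$1$ eigenvector of $M$, and this is unique since $\delta > 0$. The main obstacle is the careful bookkeeping of the Jordan blocks, in particular handling the degenerate directions where $A\ket{v}$ and $SA\ket{v}$ are linearly dependent (the $\lambda = \pm 1$ cases): these correspond to the Perron and potential bipartite blocks and must be treated separately, but they do not threaten the phase gap estimate. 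Beyond this the argument is purely Szegedy's, and I would cite \cite{szegedy2004quantum,magniez2011search} for the detailed $2\times 2$ block analysis rather than reproduce it in full.
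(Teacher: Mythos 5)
The paper does not actually prove this lemma---it is imported directly from Szegedy and Magniez et al.---and your reconstruction is precisely the standard argument from those references (the discriminant matrix $M = A^\dagger S A = D^{-1/2} A_G D^{-1/2}$ similar to $P$, the Szegedy/Jordan $2\times 2$ blocks spanned by $A\ket{v}$ and $SA\ket{v}$ with eigenvalues $e^{\pm i\theta}$, $\cos\theta=\lambda$, and $A\sqrt{\pi}=\ket{\pi}$), carried out correctly, including the degenerate $\lambda=\pm1$ blocks and the $W=-S$ action on $(\mathcal{A}+S\mathcal{A})^{\perp}$. The only nitpick is that your argument directly gives $\Delta=\Omega(\sqrt{\delta})$, which is the direction the paper actually uses; the matching $O(\sqrt{\delta})$ upper bound implicit in the ``$\Theta$'' holds only when the gap is attained at $\lambda_2$ rather than $\lambda_n$, a looseness present in the lemma statement itself and harmless for the algorithm.
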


From this lemma, combined with the quantum algorithms for phase estimation and amplitude amplification, we can derive the folklore approach to quantum walk sampling, discussed in for instance \cite{richter2007quantum,wocjan2008speedup,poulin2009sampling,orsucci2015faster}.
Since we will use it as a subroutine, we summarize it below.
For a general subset $\S \subseteq \V$, we denote the state $\ket{\S} = d(\S)^{-1/2} \sum_{(i,j)\in\E(\S)} \ket{i,j}$.
\begin{proposition} \label{prop:folk}
Given an initial set $\S \subseteq \V$ and a lower bound $\gamma \leq \delta$, we can generate a state $\epsilon$-close to $\ket{\pi}$ using an expected number of $O(d(\S)^{-1/2} m^{1/2} \gamma^{-1/2} \log \epsilon^{-1})$ calls to $W$, $O(d(\S)^{-1/2} m^{1/2})$ reflections around $\ket{\S}$, and an additional $O(\log\epsilon^{-1} \log^2\gamma^{-1})$ time and space complexity.
\end{proposition}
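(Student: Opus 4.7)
The plan is to combine quantum phase estimation (QPE) on the quantum walk operator $W$ with amplitude amplification (AA), starting from $\ket{\S}$ and amplifying its $\ket{\pi}$-component. First I would observe that $\ket{\S}$ lies in the subspace $\mathrm{span}_\complex\{\ket{\psi_i} \mid i\in\V\}$, since
\[
\ket{\S} \;=\; \frac{1}{\sqrt{d(\S)}} \sum_{i\in\S} \sqrt{d(i)}\,\ket{\psi_i},
\]
and a direct computation gives the overlap $\braket{\pi|\S} = \sqrt{d(\S)/m}$. By Lemma \ref{lem:QW-gap}, $\ket{\pi}$ is the unique eigenvalue-$1$ eigenvector of $W$ inside this span, and every other eigenphase $\theta$ appearing in the decomposition of $\ket{\S}$ satisfies $|\theta|\geq \Delta \in \Theta(\sqrt{\delta})\geq \Theta(\sqrt{\gamma})$.

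Next I would run QPE on $W$ with precision $c\sqrt{\gamma}$ for a sufficiently small constant $c$, so that the phase $0$ is cleanly separated from every other eigenphase of $W$. A single such QPE uses $O(\gamma^{-1/2})$ controlled applications of $W$ and $O(\log\gamma^{-1})$ ancilla qubits. Applied to $\ket{\S}$, the phase register collapses to $\ket{0}$ with probability $d(\S)/m$, in which case the data register equals (up to QPE error) the target $\ket{\pi}$. Rather than measuring, I use the ``phase-$0$'' event as the good subspace of amplitude amplification: each AA iterate consists of one reflection about $\ket{\S}$ plus one reflection about the accepting subspace, the latter implemented by running QPE, reflecting about $\ket{0}$ in the phase register, and uncomputing QPE. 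Boosting the initial amplitude $\sqrt{d(\S)/m}$ to $\Omega(1)$ requires $O(\sqrt{m/d(\S)})$ iterates, after which an expected $O(1)$ retries complete the algorithm.

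Multiplying these budgets yields $O(d(\S)^{-1/2} m^{1/2})$ reflections about $\ket{\S}$ and, after setting the per-QPE accuracy below $\epsilon\sqrt{d(\S)/m}$ and using a median-of-$\Theta(\log\epsilon^{-1})$ trick to suppress the QPE failure probability, $O(d(\S)^{-1/2} m^{1/2} \gamma^{-1/2}\log\epsilon^{-1})$ calls to $W$. The ancilla register stores $O(\log\gamma^{-1})$ qubits per QPE with $\Theta(\log\epsilon^{-1})$ parallel estimates, giving the stated $O(\log\epsilon^{-1}\log^2\gamma^{-1})$ time and space overhead. The main obstacle in the analysis is controlling how the per-round QPE error propagates through the $O(\sqrt{m/d(\S)})$ AA rounds; this is precisely what the median trick and the tightened QPE precision handle. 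Beyond that, the argument is a direct application of the Szegedy/MNRS spectral lemma together with AA on QPE.
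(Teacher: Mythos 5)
Your proposal is correct and follows essentially the same route as the paper: the paper packages your ``QPE with median-boosted success probability'' step as the amplified phase estimation operator $U$ of Magniez et al.\ (Theorem 6 of the MNRS paper) and then invokes the expected-time amplitude amplification of Brassard et al.\ on the flag-$\ket{0}$ subspace, using exactly the overlap $|\braket{\pi|\S}|=\sqrt{d(\S)/m}$ and phase gap $\Theta(\sqrt{\gamma})$ you identify. The only differences are cosmetic bookkeeping (your explicit tightening of the per-QPE accuracy versus the paper's choice $k\in\Theta(\log\epsilon^{-1})$), not a different argument.
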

\begin{proof}
Let the operator $U$ be defined by the amplified quantum phase estimation algorithm, as used in \cite[Theorem 6]{magniez2011search}.
For some integer $k$, this operator maps an initial state $\ket{\S}$ to the state
\[
U\ket{\S}\ket{0}
= \braket{\pi|\S} \ket{\pi}\ket{0} + \ket{\Gamma},
\]
where $\ket{\Gamma}$ is such that $\|(\Id \otimes \ket{0}\bra{0}) \ket{\Gamma}\| \leq 2^{-k}$.
The operator $U$ can be implemented using $O(k \Delta^{-1}) \in O(k \gamma^{-1/2})$ calls to $W$ and $W^\dag$, and $O(k\log^2\gamma^{-1})$ additional space and elementary gates.

On this state we can invoke the amplitude amplification scheme from \cite[Theorem 3]{brassard2002quantum} to retrieve the projection of $U\ket{\S}\ket{0}$ on the image of $\Id \otimes \ket{0}\bra{0}$, which is $2^{-k}$-close to $\ket{\pi}$.
This requires an expected number of $\Theta(|\braket{\S|\pi}|^{-1})$ calls to $U$, $U^\dag$ and the reflection operator $\Id \otimes (2\ket{0}\bra{0} - \Id)$.
We prove the proposition by choosing $k \in \Theta(\log\epsilon^{-1})$ and noting that $|\braket{\S|\pi}| = d(\S)^{1/2} m^{-1/2}$.
\end{proof}
\noindent
On a general graph, and starting from some initial node $\S = \{i\}$, this scheme requires $\tO(d(i)^{-1/2} m^{1/2} \gamma^{-1/2}) \in \tO(m^{1/2} \gamma^{-1/2})$ QW steps, or $\tO(m^{1/2} d_M^{1/2} \gamma^{-1/2})$ degree and neighbor queries.

\section{Quantum Walk Sampling} \label{sec:QW-sampling}
In this section we elaborate our scheme for quantum walk sampling.
We separately address the process for growing a seed set, the QRAM data structure that we require, and their combination with the folklore QW sampling routine.

\subsection{Growing a Seed Set} \label{sec:BFS}
We propose the following algorithm to grow a seed set in the graph.
It is a variation on the breadth-first search algorithm, returning an edge set of given size.

\vspace{3mm} {\centering
\begin{minipage}{.9\linewidth}
\begin{algorithm}[H]
\caption{Breadth-First Edge Search} \label{alg:bfs}
\normalsize
\textbf{Input:} initial node $i$ and query access to a connected graph $G$, integer $M$ \\
\textbf{Do:}
\begin{algorithmic}[1]
\State
create lists $S = \emptyset$ and $E = \emptyset$, and queue $B = (i)$
\While {$B \neq \emptyset$}
  \State
  $i \leftarrow \mathtt{dequeue}(B)$, $\mathtt{add}(S \leftarrow i)$
  \ForAll{$j$ s.t.~$(i,j) \in \E$}
    \If{$j \notin S$}
      \State $\mathtt{add}(E \leftarrow (i,j))$
      \State \algorithmicif\ $|E| = M$ \algorithmicthen\ terminate and output $E$
    \EndIf
    \State \algorithmicif\ $j \notin B$ \algorithmicthen\ $\mathtt{enqueue}(B \leftarrow j)$
  \EndFor
\EndWhile
\end{algorithmic}
\end{algorithm}
\end{minipage}
\par
} \vspace{5mm}

\begin{lemma} \label{lem:BFS}
If $M \leq m$, then Algorithm \ref{alg:bfs} outputs a subset $E \subseteq \E$ with $|E| = M$.
Its time and space complexity, and degree and neighbor query complexity, are $\tO(M)$.
\end{lemma}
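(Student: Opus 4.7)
The plan is to handle correctness and complexity separately. For correctness, I would observe that Algorithm~\ref{alg:bfs} is a variant of breadth-first search whose outer loop can only exit by reaching $|E| = M$ on line 7 or by emptying the queue $B$. In a connected graph the latter occurs only after every vertex has been dequeued into $S$, at which point every undirected edge of $G$ has contributed exactly one ordered pair to $E$ (the pair whose first coordinate was dequeued first). Under the assumption $M \leq m$, this total is at least $M$, so exit via $B = \emptyset$ cannot occur before $|E|$ reaches $M$, and the algorithm terminates with $|E| = M$.

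The heart of the complexity argument is a degree-counting bound. Let $S$ denote the set of dequeued vertices at termination. Each outer iteration dequeues one vertex $i$, issues one degree query, and performs $d(i)$ neighbor queries together with $O(1)$ membership tests against $S$ and $B$. If $S$ and $B$ are represented as hash sets, each such test costs $\tO(1)$, so both the total work and the total query count are $\tO(|S| + d(S)) = \tO(d(S))$.

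To bound $d(S)$, I would decompose $d(S) = 2\,E(S,S) + E(S,\bar{S})$, where $E(S,S)$ counts undirected edges with both endpoints in $S$ and $E(S,\bar{S})$ counts edges crossing from $S$ to $\bar{S} = \V \setminus S$. Each such edge appears in $E$ exactly once: an internal edge $\{i,j\} \subseteq S$ is inserted when whichever of $i, j$ is dequeued first (the other not yet being in $S$ at that moment), while a cut edge $\{i,j\}$ with $i \in S$, $j \notin S$ is inserted when $i$ is processed. Hence $|E| = E(S,S) + E(S,\bar{S}) = M$, and since $E(S,S) \leq M$ we obtain $d(S) \leq 2M$. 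The space complexity is then dominated by $|S|$, $|B|$, and $|E|$, each bounded by $O(M)$, yielding $\tO(M)$ space as well. No step here looks like a serious obstacle; the crucial idea is simply the degree-sum bookkeeping that converts the target output size $M$ into an $O(M)$ ceiling on the exploration effort.
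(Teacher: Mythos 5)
Your route differs from the paper's. The paper bounds the work by counting for-loop iterations directly: each iteration examines one edge, every undirected edge is examined at most twice, and it is added to $E$ at its first examination, so the algorithm stops after fewer than $2M$ iterations. You instead charge each dequeued vertex its full degree and then try to show $d(S)\le 2M$ for the set $S$ of dequeued vertices. That key step is false, because the algorithm can terminate in the middle of processing the last dequeued vertex $i$: $i$ is already in $S$, so all of $d(i)$ counts toward $d(S)$, yet most of its incident edges may never be examined and appear in neither $E$ nor your identity $|E| = E(S,S)+E(S,\bar S)$ (only $|E|\le E(S,S)+E(S,\bar S)$ holds at termination). Concretely, take a star with center $c$ and $n-1$ leaves, start at a leaf $\ell$, and set $M=2$: after $(\ell,c)$ is added and $c$ is dequeued, the second examined edge brings $|E|$ to $2$ and the algorithm halts, but $d(S)=d(\ell)+d(c)=n$, far exceeding $2M=4$. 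The same mid-loop termination also means your per-vertex accounting of ``$d(i)$ neighbor queries'' overcounts the last vertex, which is exactly what forces you to need the false bound.

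The conclusion is still correct and the repair is easy: either count edge examinations directly, as the paper does, or apply your degree decomposition only to $S$ minus the last dequeued vertex --- all of those vertices are fully processed, so every undirected edge incident to them lies in $E$, giving $d(S\setminus\{i_{\mathrm{last}}\})\le 2M$ --- and bound the work on the last vertex by the number of edges it actually examined, each of which either adds a pair to $E$ or is the second encounter of a pair already in $E$, hence at most $2M$ in total. Your correctness paragraph is a reasonable addition (the paper does not argue termination with $|E|=M$ explicitly), though it inherits the same convention quibble as the lemma itself: the BFS records each undirected edge as a single ordered pair while $m=|\E|$ counts ordered pairs, so a complete traversal collects only $m/2$ pairs and strictly one needs $M\le m/2$ for the stated guarantee.
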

\begin{proof}
Assuming the lists are ordered, any of the list and queue operations (enqueueing, dequeueing, adding an element, outputting the size of a list, searching an element in a list) takes polylogarithmic time.
As a consequence, the time complexity will be determined up to log-factors by the number of for-loops before the algorithm terminates.

In every for-loop an edge is considered.
Every edge is encountered at most twice, the first time of which it is added to $E$.
Since the algorithm terminates when $|E|=M$, this implies that the algorithm terminates after less than $2M$ for-loops.
\end{proof}

\noindent
Alternatively we can output the node set $\S \subseteq \V$.
Since $E \subseteq \E(\S)$, we have that $d(\S) \geq M$.

\subsection{Kerenidis-Prakash QRAM} \label{sec:QRAM}
After growing the seed set $\S \subseteq \V$, we wish to use it as a resource for our QW sampling algorithm.
Specifically we will require the generation of and reflection around the superposition $\ket{\S}$ over edges starting in $\S$.
By naive query access to the database containing $\S$, this requires a time complexity $\Omega(d(\S)^{1/2})$ per generation or reflection, which follows from the bound on index erasure \cite{ambainis2011symmetry}.
Since our QW sampling algorithm will require $\Omega(m^{1/3})$ such operations, the total time complexity for $d(\S) \in \Theta(m^{1/3})$ would become $\Omega(m^{1/2})$, thus providing no speedup on the time complexity as compared to the folklore approach.
To remedy this, we use a more efficient QRAM data structure proposed by Kerenidis and Prakash \cite{kerenidis2016quantum} in their quantum recommendation algorithm.
We extract the following result, abstracted from their Theorem 15 (by setting $\mathtt{m}=1$, $\mathtt{n}=n^2$ and inputting entries $(1,(i,j),1)$ for all $(i,j)\in\S$).
\begin{theorem}[Kerenidis-Prakash \cite{kerenidis2016quantum}] \label{thm:QRAM}
Assume we have query access to a set $\S \subseteq \V$.
There exists a classical data structure to store the set $\S$ with the following properties:
\begin{itemize}
\item
the size of the structure is $O(|\S| \log^2(m))$,
\item
the time and query complexity to fill the structure is $O(|\S| \log^2(m))$,
\item
having quantum access to the data structure we can perform the mapping $U: \ket{0} \to \ket{\S}$ and its inverse $U^\dag$ in time $\mathrm{polylog}(m)$.
\end{itemize}
\end{theorem}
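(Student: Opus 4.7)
The plan is to store $\S$ in a binary tree whose leaves are indexed by the bit representations of ordered pairs $(i,j)\in\V\times\V$, giving depth $\lceil\log_2 n^2\rceil = O(\log m)$. Each leaf conceptually holds the indicator of $(i,j)\in\S$, and each internal node holds the number of elements of $\S$ lying in its subtree. To control space, I only store the sparse skeleton, i.e.\ the $O(|\S|\log m)$ nodes that lie on some root-to-leaf path of an element of $\S$, keyed by their $O(\log m)$-bit address in a balanced search tree or trie. Each stored entry carries an $O(\log m)$-bit address and an $O(\log|\S|)$-bit counter, yielding total size $O(|\S|\log^2 m)$.

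To fill the structure I iterate over the $|\S|$ elements, and for each element walk the corresponding root-to-leaf path, incrementing (or creating) the counter at each of the $O(\log m)$ touched internal nodes. Each address lookup or insertion costs $O(\log m)$ in the balanced index, so the classical construction runs in $O(|\S|\log^2 m)$ time; the query count matches because the only queries made are reads of the input elements of $\S$.

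For the quantum map $U:\ket{0}\to\ket{\S}$ I use the standard sequence of controlled single-qubit rotations on the address register (Grover–Rudolph style). At level $\ell$, conditioned on the already-prepared prefix $b_1\dots b_{\ell-1}$, I apply to qubit $\ell$ the rotation sending $\ket{0}$ to $\sqrt{c_0/c}\,\ket{0}+\sqrt{c_1/c}\,\ket{1}$, where $c_b$ is the subtree count stored at the address $b_1\dots b_{\ell-1} b$ (taken to be $0$ if that address is absent from the sparse store) and $c=c_0+c_1$. Composing these $O(\log m)$ rotations produces $\ket{\S}$ exactly, and the inverse $U^\dagger$ is obtained by reversing the circuit.

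The main obstacle I anticipate is ensuring that the sparse classical structure supports the coherent look-ups these rotations require at polylogarithmic cost: the rotation angle at level $\ell$ is a function of the stored counts at a prefix-dependent address, and the circuit must fetch these in superposition over all prefixes prepared so far. A naive balanced BST is sequential and awkward for coherent access, so one has to argue that either a bucket-brigade QRAM over the $O(|\S|\log m)$ stored cells, or a fixed-shape array with a polylog-depth index, realises each lookup in $\mathrm{polylog}(m)$ depth. Granted this quantum memory model, the total cost of $U$ and $U^\dagger$ is $O(\log m)$ controlled rotations times the per-lookup cost, i.e.\ $\mathrm{polylog}(m)$, establishing the three claimed bullets.
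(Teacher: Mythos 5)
Your proposal is correct, but note that the paper does not prove this statement at all: it imports it verbatim from Kerenidis--Prakash (their Theorem~15, instantiated with $\mathtt{m}=1$, $\mathtt{n}=n^2$ and entries $(1,(i,j),1)$), so there is no internal proof to compare against. What you have written is essentially a reconstruction of the construction underlying that cited theorem — a binary tree over the $O(\log m)$-bit addresses of pairs, sparse storage of the $O(|\S|\log m)$ touched nodes with subtree counts, and Grover--Rudolph-style conditional rotations whose angles are read coherently from the stored counts — and your accounting of the size, fill time, and circuit cost matches the three bullets. The obstacle you flag at the end (coherent, prefix-dependent lookups at polylogarithmic cost) is not a gap in your argument relative to the statement: the theorem explicitly assumes ``quantum access to the data structure,'' i.e.\ exactly the QRAM-type memory model you invoke, so you may take it as given rather than argue for a bucket-brigade realisation. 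The only remaining nitpick is your claim that the rotations produce $\ket{\S}$ \emph{exactly}: with a finite gate set the angles $\arcsin\sqrt{c_1/c}$ are implemented only approximately, but achieving any inverse-polynomial error costs only a further $\mathrm{polylog}$ overhead, which is the same idealisation already present in the Kerenidis--Prakash statement and is harmless for the $\epsilon$-approximate sampling guarantee in which the theorem is used.
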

\noindent
This easily implies the ability to reflect around $\ket{\S}$ in time $\mathrm{polylog}(m)$: we can rewrite the reflection $2\ket{\S}\bra{\S} - \Id = U (2\ket{0}\bra{0} - \Id) U^\dag$, so that it comes down to implementing $U$, $U^\dag$ and a reflection around the basis state $\ket{0}$.

\subsection{QW Sampling Algorithm} \label{sec:QW-alg}

Building on the seed set and QRAM, we can now propose our quantum sampling algorithm for creating the state $\ket{\pi}$ in $\tO(m^{1/3} \delta^{-1/3})$ time, space and quantum walk steps.

\vspace{3mm} {\centering
\begin{minipage}{.9\linewidth}
\begin{algorithm}[H]
\caption{Quantum Walk Sampling} \label{alg:qsampling}
\normalsize
\textbf{Input:} parameters $\gamma$ and $\epsilon$; initial node $i$ and query access to a graph $G$ \\
\textbf{Do:}
\begin{algorithmic}[1]
\For{$M = 1,2,4,.\,.\,,2^k,\dots$}
\State
use BFS to grow a seed set $\S$ with $d(\S) \in \Theta(M^{1/3}\gamma^{-1/3})$
\State
load $\S$ in QRAM
\State
apply the routine from Proposition \ref{prop:folk} on $\ket{\S}$ for $\widetilde{\Theta}(M^{1/3}\gamma^{-1/3}\log\epsilon^{-1})$ steps
\State
if the routine finished, terminate and return its output
\EndFor
\end{algorithmic}
\end{algorithm}
\end{minipage}
\par
} \vspace{5mm}

\begin{theorem}[Quantum Walk Sampling] \label{thm:qsampling}
If we choose $\gamma \leq \delta$ then Algorithm \ref{alg:qsampling} returns a state $\epsilon$-close to $\ket{\pi}$.
The algorithm requires expected space, time and quantum walk steps in
\[
\tO(m^{1/3} \gamma^{-1/3} \log \epsilon^{-1}).
\]
\end{theorem}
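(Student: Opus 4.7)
The plan is to analyze the cost of a single iteration of the outer loop, identify for what $M$ the folklore subroutine is likely to terminate within its allotted budget, and then sum costs across the doubling schedule $M=1,2,4,\dots$

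For an iteration with parameter $M=2^k$: by Lemma \ref{lem:BFS}, calling Algorithm \ref{alg:bfs} with size parameter $\Theta(M^{1/3}\gamma^{-1/3})$ produces a seed set $\S$ with $d(\S) \in \Theta(M^{1/3}\gamma^{-1/3})$ using $\tO(M^{1/3}\gamma^{-1/3})$ time, space and queries. Theorem \ref{thm:QRAM} then loads $\S$ into a QRAM in the same asymptotic cost, after which preparation of $\ket{\S}$ and the reflection $2\ket{\S}\bra{\S}-\Id$ each cost $\mathrm{polylog}(m)$. Step 4 caps Proposition \ref{prop:folk} at $\widetilde{\Theta}(M^{1/3}\gamma^{-1/3}\log\epsilon^{-1})$ QW steps and reflections around $\ket{\S}$, so the whole iteration costs $\tO(M^{1/3}\gamma^{-1/3}\log\epsilon^{-1})$ in time, space and QW steps.

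For correctness and termination: whenever step 4 completes, Proposition \ref{prop:folk} guarantees an output $\epsilon$-close to $\ket{\pi}$. The subroutine needs an expected
\[
O\!\left(d(\S)^{-1/2}\,m^{1/2}\gamma^{-1/2}\log\epsilon^{-1}\right)
= O\!\left((m/M)^{1/2}\cdot M^{1/3}\gamma^{-1/3}\log\epsilon^{-1}\right)
\]
QW steps, which drops below half the iteration budget once $M \geq c\,m$ for a sufficiently large constant $c$. Markov's inequality then yields that such an iteration terminates with probability at least $1/2$, independently across iterations. Summing costs: iterations with $M < c\,m$ contribute $\tO(m^{1/3}\gamma^{-1/3}\log\epsilon^{-1})$ in total, their per-iteration costs forming a geometric series in $M^{1/3}$; past the threshold, the expected number of additional iterations is $O(1)$ by geometric decay of the survival probability, and each still costs $\tO(m^{1/3}\gamma^{-1/3}\log\epsilon^{-1})$ up to a constant factor in $M$. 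This yields the advertised expected bound.

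The main subtlety I anticipate is the boundary case where the requested seed-set degree $\Theta(M^{1/3}\gamma^{-1/3})$ exceeds the total degree $2m$ of the connected component containing the start vertex. There BFS simply exhausts the component and returns all of its edges, so $\ket{\S}=\ket{\pi}$ already and step 4 terminates for free; this shortens rather than complicates the run, and is handled by capping the size parameter at $2m$ in step 2 without affecting the asymptotic cost.
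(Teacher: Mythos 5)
Your proposal is correct and follows essentially the same route as the paper: per-iteration cost via Lemma \ref{lem:BFS}, Theorem \ref{thm:QRAM} and Proposition \ref{prop:folk}, the observation that $|\braket{\pi|\S}| = (d(\S)/m)^{1/2}$ makes the truncated subroutine succeed with constant probability once $M \gtrsim m$, and a geometric sum over the doubling schedule. Your explicit Markov-inequality step and the treatment of the post-threshold expected cost (and the $d(\S) > 2m$ boundary case) just spell out details the paper leaves implicit.
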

\begin{proof}
The correctness of the algorithm follows from Proposition \ref{prop:folk}.
By this proposition we know that if $\gamma \leq \delta$ and the algorithm terminates, and hence the routine from Proposition \ref{prop:folk} finished, then it effectively outputs a state $\epsilon$-close to $\ket{\pi}$.
The complexity of the algorithm for a fixed $M$ is also easily bounded:
the complexity of steps 2 and 3 is both $\tO(M^{1/3}\gamma^{-1/3})$, which follows from Lemma \ref{lem:BFS} resp.~Theorem \ref{thm:QRAM}.
Step 4 is automatically terminated after $\tO(M^{1/3}\gamma^{-1/3}\log\epsilon^{-1})$ steps, which by Proposition \ref{prop:folk} directly bounds the number of calls to $W$ and reflections around $\ket{\S}$.
By Theorem \ref{thm:QRAM} the complexity of implementing a single reflection around $\ket{\S}$ is $\tO(1)$.
The total complexity for a fixed $M$ is therefore $\tO(M^{1/3}\gamma^{-1/3}\log\epsilon^{-1})$.

What remains to bound is the $M$-value at which the algorithm terminates.
From Proposition \ref{prop:folk} we know that if the number of steps $M^{1/3}\gamma^{-1/3}\log\epsilon^{-1}$ is sufficiently large, i.e.,
\begin{equation} \label{eq:cond-stop}
M^{1/3}\gamma^{-1/3}\log\epsilon^{-1}
\in \Omega(|\braket{\pi|\S}|^{-1} \gamma^{-1/2} \log\epsilon^{-1}),
\end{equation}
then the routine finishes with probability $\Omega(1)$.
From the fact that $\ket{\pi} = m^{-1/2} \sum_{(i,j)\in\E} \ket{i,j}$ and $d(\S) \in \Theta(M^{1/3}\gamma^{-1/3})$ it holds that $|\braket{\pi|\S}| \in \Theta(M^{1/6} \gamma^{-1/6} m^{-1/2})$.
As a consequence, if $M \geq m$ then $|\braket{\pi|\S}| \in \Theta(m^{-1/3} \gamma^{-1/6})$ and hence \eqref{eq:cond-stop} will hold, such that the routine will finish with probability $\Omega(1)$.
The expected number of for-loops is therefore $\log m + O(1)$, with the total complexity scaling as
\[
\tO\bigg( \gamma^{-1/3} \log\epsilon^{-1} \sum_{k=0}^{\log m + O(1)} 2^{k/3} \bigg)
\in \tO(m^{1/3} \gamma^{-1/3} \log\epsilon^{-1}). \qedhere
\]
\end{proof}

Alternatively we are interested in bounding the algorithm in terms of classical queries.
We can naively substitute every quantum walk step for $\tO(\sqrt{d_M})$ degree and neighbor queries, yielding a complexity $\tO(m^{1/3} d_M^{1/2} \gamma^{-1/3})$.
However, if we are given an upper bound $D \geq d_M$, we can improve this complexity by slightly increasing the size of the seed set.
We note that in the array model \cite{durr2006quantum} the degrees are assumed to be known beforehand, so we exactly know $d_M$.
\begin{corollary}
Given an initial node $i$, a lower bound $\gamma \leq \delta$ and an upper bound $D \geq d_M$, we can generate a state $\epsilon$-close to $\ket{\pi}$ in expected space, time, and degree and neighbor queries in
\[
\tO(m^{1/3} D^{1/3} \gamma^{-1/3} \log\epsilon^{-1}).
\]
\end{corollary}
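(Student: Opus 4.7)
The plan is to mirror the proof of Theorem \ref{thm:qsampling}, but with the seed-set size rebalanced so that the cost of growing the seed set matches the cost of implementing the quantum walk steps in the classical query model, rather than in the QW-step model.

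Concretely, I would run Algorithm \ref{alg:qsampling} but with the line ``$d(\S) \in \Theta(M^{1/3}\gamma^{-1/3})$'' replaced by ``$d(\S) \in \Theta(M^{1/3} D^{1/3} \gamma^{-1/3})$'', and the inner step count in step 4 adjusted accordingly. The analysis then proceeds in three steps. First, by Lemma \ref{lem:BFS}, growing a seed set with $d(\S) \in \Theta(M^{1/3} D^{1/3} \gamma^{-1/3})$ costs $\tO(M^{1/3} D^{1/3} \gamma^{-1/3})$ time and queries; loading it into the QRAM (Theorem \ref{thm:QRAM}) has the same cost. Second, since $|\braket{\pi|\S}|^2 = d(\S)/m \in \Theta(M^{1/3} D^{1/3} \gamma^{-1/3} / m)$, Proposition \ref{prop:folk} implies that the folklore routine calls $W$ an expected $\tO(M^{-1/6} D^{-1/6} m^{1/2} \gamma^{-1/3} \log\epsilon^{-1})$ times, and each such call can be simulated with $O(\sqrt{D})$ degree and neighbor queries. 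Multiplying yields $\tO(M^{-1/6} D^{1/3} m^{1/2} \gamma^{-1/3} \log\epsilon^{-1})$ queries for the quantum walk part.

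The two costs are balanced by construction: setting $M^{1/3} D^{1/3} \gamma^{-1/3} = M^{-1/6} D^{1/3} m^{1/2} \gamma^{-1/3}$ gives $M = m$, exactly the stopping value for the doubling loop, and both contributions become $\tO(m^{1/3} D^{1/3} \gamma^{-1/3})$. Third, the exact same doubling argument as in Theorem \ref{thm:qsampling} applies: once $M \geq m$ the overlap condition analogous to \eqref{eq:cond-stop} is satisfied and the routine terminates with constant probability, so there are $\log m + O(1)$ iterations and the geometric sum is dominated by the last term, yielding the advertised $\tO(m^{1/3} D^{1/3} \gamma^{-1/3} \log \epsilon^{-1})$ bound.

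No step looks particularly difficult; the only thing to be slightly careful about is the cost of the reflection around $\ket{\S}$, but Theorem \ref{thm:QRAM} provides it in $\mathrm{polylog}(m)$ time per invocation, so these reflections contribute only a $\tO(M^{-1/6} D^{-1/6} m^{1/2} \gamma^{-1/6})$ overhead which is subsumed. The main conceptual point to stress in the write-up is simply \emph{why} one should increase the seed set beyond the $\tO(m^{1/3}\gamma^{-1/3})$ of Theorem \ref{thm:qsampling}: a larger seed set amortizes the $\sqrt{D}$ per-step query overhead against fewer walk steps, and the optimum shifts the exponent on $D$ from $1/2$ down to $1/3$.
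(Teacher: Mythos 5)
Your proposal is correct and takes essentially the same route as the paper: rerun Algorithm \ref{alg:qsampling} with the seed set enlarged to $d(\S) \in \Theta(M^{1/3} D^{1/3} \gamma^{-1/3})$ and step 4 capped at $\widetilde{\Theta}(M^{1/3} D^{-1/6} \gamma^{-1/3} \log\epsilon^{-1})$ QW steps, then charge $O(\sqrt{D})$ degree and neighbor queries per QW step and repeat the doubling argument of Theorem \ref{thm:qsampling}. In fact your exponent $\gamma^{-1/3}$ in the seed-set size is the internally consistent choice: the paper's proof writes $\Theta(M^{1/3} D^{1/3} \gamma^{-1/2})$ there, which appears to be a typo since the BFS cost would then exceed the claimed $\tO(m^{1/3} D^{1/3} \gamma^{-1/3} \log\epsilon^{-1})$ bound.
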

\begin{proof}
We adapt Algorithm \ref{alg:qsampling} by slightly increasing the size of the seed set in step 2 to $\Theta(M^{1/3} D^{1/3} \gamma^{-1/2})$ and decreasing the number of steps in step 4 to $\widetilde{\Theta}(M^{1/3} D^{-1/6} \gamma^{-1/3} \log\epsilon^{-1})$.
Following the proof of Theorem \ref{thm:qsampling}, the algorithm then terminates after $\tO(m^{1/3} D^{1/3} \gamma^{-1/3} \log\epsilon^{-1})$ classical steps and queries, and $\tO(m^{1/3} D^{-1/6} \gamma^{-1/3})$ QW steps.
Now we can substitute each QW step with $\tO(\sqrt{d_M})$ degree and neighbor queries, yielding the claimed complexity.
\end{proof}

\section{Application: st-Connectivity} \label{sec:app-st}

\subsection{General Algorithm} \label{sec:general-st}

Let $\delta^{(s)}$ and $\delta^{(t)}$ denote the spectral gaps of the connected components of $s$ resp.~$t$.
\begin{proposition} \label{prop:st}
Given $s,t \in \V$ and a lower bound $\gamma \leq \delta^{(s)},\delta^{(t)}$, we can decide $st$-connectivity with probability $1-\epsilon$ in $\tO(m^{1/3} \gamma^{-1/3} \log\epsilon^{-1})$ QW steps.
If we are also given an upper bound $D \geq d_M$, then we can do so in $\tO(m^{1/3} D^{1/3} \gamma^{-1/3} \log\epsilon^{-1})$ degree and neighbor queries.
\end{proposition}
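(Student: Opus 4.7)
The plan is to reduce $st$-connectivity to distinguishing two quantum samples via a SWAP test. For $v \in \{s,t\}$ let $\ket{\pi^{(v)}}$ denote the uniform superposition over the edges in the connected component containing $v$. If $s$ and $t$ lie in the same component then $\ket{\pi^{(s)}} = \ket{\pi^{(t)}}$, whereas if they lie in different components then $\ket{\pi^{(s)}}$ and $\ket{\pi^{(t)}}$ have disjoint supports and are therefore orthogonal, so a SWAP test cleanly separates the two cases.

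First I would invoke Algorithm \ref{alg:qsampling} twice, once starting from $s$ and once from $t$, with accuracy parameter fixed to a small constant $\epsilon_0$ (say $1/16$). Since both the BFS in step 2 and the quantum walk operator $W$ keep the computation inside the connected component of the initial vertex, each invocation effectively runs inside that component, which has spectral gap at least $\gamma$ and at most $m$ edges. Theorem \ref{thm:qsampling} then produces states $\epsilon_0$-close to $\ket{\pi^{(s)}}$ resp.~$\ket{\pi^{(t)}}$ using $\tO(m^{1/3}\gamma^{-1/3})$ QW steps each.

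Next I would apply a SWAP test to the two prepared samples. On the ideal states it accepts with probability $1$ in the connected case and $1/2$ in the disconnected case; by contractivity of trace distance under quantum operations, the constant preparation error shifts both probabilities by only $O(\epsilon_0)$, preserving a constant gap. Repeating the whole sample-and-test procedure $O(\log\epsilon^{-1})$ times and taking a majority vote then drives the decision error below $\epsilon$, for a total cost of $\tO(m^{1/3}\gamma^{-1/3}\log\epsilon^{-1})$ QW steps. For the query version I would instead invoke the degree/neighbor-query variant of the sampling algorithm (the corollary at the end of Section \ref{sec:QW-alg}), which produces each sample in $\tO(m^{1/3} D^{1/3} \gamma^{-1/3})$ queries, yielding the claimed bound after the same $O(\log\epsilon^{-1})$ repetitions. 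The one point that needs verification, though it is mild, is that the constant-accuracy samples perturb the SWAP test probabilities by strictly less than the $1$-vs-$1/2$ gap; this is the standard $\|\rho-\sigma\|_1$ bound on any binary measurement outcome.
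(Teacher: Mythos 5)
Your proposal is correct and follows essentially the same route as the paper: prepare constant-accuracy approximations of $\ket{\pi^{(s)}}$ and $\ket{\pi^{(t)}}$ with the QW sampling algorithm (and its degree/neighbor-query variant for the second bound), distinguish identical from orthogonal ideal states via a SWAP test, and repeat $O(\log\epsilon^{-1})$ times. The only cosmetic difference is that you bound the effect of the preparation error through trace-distance contractivity while the paper bounds the inner product $|\braket{\psi_s|\psi_t}|$ directly; your explicit remark that the BFS and walk stay inside the component of the start vertex is a point the paper leaves implicit.
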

\begin{proof}
Given $\gamma$ we can create an $(\epsilon'=1/4)$-approximation $\ket{\psi_s}$ (resp.~$\ket{\psi_t}$) of the superposition $\ket{\pi^{(s)}}$ (resp.~$\ket{\pi^{(t)}}$) over the edges of the connected component of $s$ (resp.~$t$) in $\tO(m^{1/3} \gamma^{-1/3})$ QW steps.
If we also have $D$, then we can do so in $\tO(m^{1/3} d_M^{1/3} \gamma^{-1/3})$ degree and neighbor queries.

If $s$ and $t$ are connected, then $|\braket{\psi_s|\psi_t}| \geq 1 - \epsilon'^2$, whereas if they are not, then $|\braket{\psi_s|\psi_t}| \leq 2\epsilon'$.
We can distinguish these cases by performing the SWAP-test \cite{aharonov2003adiabatic} between these states, using a single copy of both states, and $O(1)$ additional gates.
If $s$ and $t$ are connected, then the test returns 1 with probability $(1-|\braket{\psi_s|\psi_t}|)/2 \leq \epsilon'^2/2 = 1/32$, if $s$ and $t$ are not connected, the test returns 1 with probability $(1-|\braket{\psi_s|\psi_t}|)/2 \geq 1/2-\epsilon' = 1/4$.
Repeating this scheme $O(\log\epsilon^{-1})$ times then allows to decide $st$-connectivity with probability $1-\epsilon$.
\end{proof}

This approach best compares to the following classical scheme: use $\widetilde{\Theta}(n^{1/2})$ independent random walks of length $\Theta(\gamma^{-1})$ from $s$ and $t$ to gather samples from the stationary distributions on the connected components of $s$ resp.~$t$.
If $s$ and $t$ are connected then with constant probability the sample sets will overlap, which follows from the birthday paradox.
This scheme requires $\tO(n^{1/2} \gamma^{-1})$ random walk steps, or equivalently, neighbor queries.
It lies at the basis of the graph expansion tester by Goldreich and Ron \cite{goldreich2011testing}, and the subsequent work on testing closeness of distributions \cite{batu2013testing} and clusterability of graphs \cite{czumaj2015testing}.

In Figure \ref{fig:st} we compare the query complexity of our approach to the existing quantum algorithms for $st$-connectivity.
If no promise is given on negative instances (such as in \cite{jarret2018quantum} in the form of a capacitance $C_{s,t}$), then all former algorithms require $\Omega(n^{1/2})$ queries when maximized over all $(s,t)$-pairs of the graph.
As a consequence, for the graph isomorphism problem treated in the next section, they all have a $\Omega(2^{n/2})$ complexity.
Our approach however has a $\tO(2^{n/3})$ complexity.

\begin{figure}[ht]
\centering
\def\arraystretch{1.5}
\begin{tabular}{|l|l|l|}
\hline
 & query complexity & model \\
\hline
D\"urr et al \cite{durr2006quantum} & $\Theta(n)$ & array \\
\hline
D\"urr et al \cite{durr2006quantum} & $\Theta(n^{3/2})$ & adjacency \\
\hline
Belovs-Reichardt \cite{belovs2012span} & $O(m^{1/2} \, d_{s,t}^{1/2})$ & adjacency \\
\hline
Belovs \cite{belovs2013quantum} & $O(m^{1/2} R_{s,t}^{1/2}) \in O(m^{1/2} \, \delta^{-1/2})$ & QW \\
\hline
Jarret et al \cite{jarret2018quantum} & $O(R_{s,t}^{1/2} \, C_{s,t}^{1/2})$ & adjacency \\
\hline
\hline
folklore QW sampling & $O(m^{1/2} \, \delta^{-1/2})$ & QW \\
\hline
this work & $\tO(m^{1/3} \, \delta^{-1/3})$ & QW \\
\hline
this work & $\tO(m^{1/3} \, \delta^{-1/3} \, d_M^{1/3})$ & array \\
\hline
\end{tabular}
\caption{Query complexity of $st$-connectivity using different quantum algorithms in different models.
The array model measures the number of degree and neighbor queries; the adjacency model measures the number of pair queries (e.g., ``are $i$ and $j$ neighbors?''); the QW model measures the number of QW steps.
The quantities $d_{s,t}$ and $R_{s,t}$ denote the length of the shortest path and the effective resistance, respectively, between $s$ and $t$.
The quantity $C_{s,t}$ denotes the capacitance between $s$ and $t$ in negative instances, i.e., if $s$ and $t$ are disconnected then $C_{s,t}$ quantifies ``how'' disconnected they are.} \label{fig:st}
\end{figure}

\subsection{Graph Isomorphism} \label{sec:graph-iso}
We consider some given $n$-node graph $g$, described by its adjacency matrix.
To this graph we can associate a new regular graph $G^{(g)} = (\V,\E)$ with nodes $\V = \{\sigma(g) \mid \sigma \in S_n\}$, consisting of permutations of the original graph nodes, and edges $\E = \{(h,\sigma_{i,j}(h)) \mid h \in \V, i,j \in [n]\}$, corresponding to all possible transpositions of two elements.
We can easily prove the following.
\begin{lemma}
The random walk on $G^{(g)}$ has a spectral gap $\delta \in \Omega(n^{-1} \log^{-1} n)$.
\end{lemma}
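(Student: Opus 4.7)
The plan is to recognize the random walk on $G^{(g)}$ as a projection of the random transposition walk on the symmetric group $S_n$, and then invoke the classical Diaconis--Shahshahani mixing time bound. First I would identify $\V$ with the right-coset space $S_n / \mathrm{Aut}(g)$ via the bijection $\sigma(g) \leftrightarrow \sigma \cdot \mathrm{Aut}(g)$, which is well-defined because $\sigma(g) = \tau(g)$ iff $\sigma^{-1}\tau \in \mathrm{Aut}(g)$. Under this identification, each edge $(h, \sigma_{i,j}(h))$ of $G^{(g)}$ corresponds to left multiplication by the transposition $\sigma_{i,j}$, with the diagonal cases $i = j$ producing self-loops that make the chain lazy with probability $1/n$. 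Since transpositions act on $S_n$ from the left while $\mathrm{Aut}(g)$ acts from the right, the two actions commute, so the coset partition is lumpable and the random walk on $G^{(g)}$ is precisely the projection of the lazy random transposition walk on $S_n$.

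Next I would apply the Diaconis--Shahshahani theorem, which asserts that the random transposition walk on $S_n$ mixes to within total variation distance $1/4$ of the uniform distribution in $t_{\mathrm{mix}}(1/4) = O(n \log n)$ steps. Because total variation distance cannot increase under a deterministic projection of the state, the walk on $G^{(g)}$ also has mixing time $O(n \log n)$. The standard inequality $t_{\mathrm{rel}} = 1/\delta \leq t_{\mathrm{mix}}(1/4)/\log 2 + 1$ for reversible Markov chains then yields $\delta \in \Omega(1/(n \log n))$. The $1/n$ laziness keeps the eigenvalues of $P$ bounded away from $-1$ by $\Omega(1/n)$, so the absolute spectral gap $\delta = 1 - \max\{|\lambda_2|, |\lambda_n|\}$ used in the paper agrees in order with $1 - \lambda_2$, and the bound transfers without loss.

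The step that requires the most care is the projection argument: one must check that the coset partition really is lumpable and that the multiplicities in the transition kernel of the projected chain match the multi-edges of $G^{(g)}$, which arise whenever two distinct pairs $(i,j)$ and $(k,\ell)$ satisfy $\sigma_{i,j} \cdot h = \sigma_{k,\ell} \cdot h$, i.e., when $\sigma_{i,j}^{-1}\sigma_{k,\ell}$ lies in the stabilizer of $h$. Once these bookkeeping issues are settled, the remainder is a routine combination of a classical mixing-time estimate with the standard mixing-to-spectral-gap conversion.
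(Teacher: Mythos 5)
Your proof is correct, and it takes a mildly different route from the paper's. The paper splits into two cases: when $\mathrm{Aut}(g)$ is trivial it identifies $G^{(g)}$ with the Cayley graph of $S_n$ generated by transpositions and reads off the gap from the Diaconis--Shahshahani mixing bound; when $\mathrm{Aut}(g)$ is nontrivial it views $G^{(g)}$ as a contraction of that Cayley graph and cites Aldous--Fill (Proposition 4.44) to compare spectral gaps directly, so the transfer to the quotient happens at the level of eigenvalues. You instead treat both cases uniformly by exhibiting the walk on $G^{(g)}$ as a lumped (projected) chain of the random transposition shuffle --- lumpability holding because left multiplication by transpositions commutes with the right action of $\mathrm{Aut}(g)$ --- and transfer the $O(n\log n)$ bound at the level of mixing times via the data-processing inequality for total variation, converting back to a spectral gap with the standard $t_{\mathrm{rel}} \leq t_{\mathrm{mix}}(1/4)/\log 2 + 1$ inequality for reversible chains. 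What your route buys is a single argument with no case split and an explicit treatment of the smallest eigenvalue (via the $1/n$ holding probability, or equivalently because the mixing-time inequality already controls $|\lambda|$), which matters since the paper defines $\delta$ as the absolute spectral gap and leaves that point implicit; what the paper's route buys is a sharper and more direct statement (the quotient's gap is at least the original gap, with no constant loss from the mixing-time round trip) at the cost of invoking the induced/quotient-chain comparison. Two small bookkeeping remarks: the cosets $\sigma\,\mathrm{Aut}(g)$ you use are conventionally called left cosets, and one should note that uniform measure on $S_n$ pushes forward to the uniform stationary distribution on $\V$ (each coset has size $|\mathrm{Aut}(g)|$ and the multigraph is regular), so the projected TV distance is measured against the correct target; neither affects the validity of your argument.
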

\begin{proof}
If $|\V| = n!$ (i.e., $g \neq \sigma(g)$ if $\sigma \neq 1$), this graph is isomorphic to the Cayley graph derived from the symmetric group with generators given by transpositions.
The mixing time of a random walk on this graph is $O(n \log n)$ by a result of Diaconis and Shashahani \cite{diaconis1981generating}, implying a lower bound on its spectral gap $\delta \in \Omega(n^{-1} \log^{-1} n)$.

If $|\V| < n!$, the graph is effectively an edge contraction of the random transposition graph.
Following Aldous and Fill \cite[Proposition 4.44]{aldous2002reversible}, a random walk on this graph is an \textit{induced chain} of the random walk on the symmetric group, in particular having a spectral gap lower bounded by the spectral gap of the original walk.
\end{proof}

Next we show how to implement a QW step on $G^{(g)}$ in $\tO(1)$ steps.
By Theorem \ref{thm:qsampling} we can then create a superposition over the edges of $G^{(g)}$ (or, equivalently, its nodes) in time $\tO(m^{1/3}) = \tO(2^{n/3})$, and by Proposition \ref{prop:st} we can solve $st$-connectivity (i.e.~graph isomorphism) in the same time.

\begin{lemma}
Implementing a quantum walk on $G^{(g)}$ takes time $\tO(1)$.
\end{lemma}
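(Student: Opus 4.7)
The plan is to observe that since $|\V| \leq n!$ and every node of $G^{(g)}$ has $\binom{n}{2}$ outgoing transposition-edges, we have $\log m = \Theta(n \log n)$, so any cost polynomial in $n$ is $\mathrm{polylog}(m) = \tO(1)$. It therefore suffices to implement the two ingredients of a QW step $W = S R_\E$ using $\mathrm{poly}(n)$ elementary gates, in the natural encoding where each node $h \in \V$ is stored as its $n^2$-bit adjacency matrix.

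The swap $S\ket{h, h'} = \ket{h', h}$ is a bitwise exchange of two $n^2$-bit registers and costs $O(n^2)$ gates. For the reflection $R_\E$, I would use the standard realization $R_\E = V(2\, \Id \otimes \ket{0}\bra{0} - \Id) V^\dagger$, where $V : \ket{h}\ket{0} \mapsto \ket{\psi_h}$ is built in three small pieces: (i) prepare the uniform superposition $\binom{n}{2}^{-1/2} \sum_{i<j} \ket{i,j}$ on an $O(\log n)$-qubit label register, at cost $O(\log n)$; (ii) copy $h$ bitwise into the neighbor register, at cost $O(n^2)$ CNOTs; and (iii) controlled on $(i,j)$, swap rows $i,j$ and columns $i,j$ of the neighbor register, at cost $O(n)$ controlled SWAPs on $n$-bit row/column blocks.

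The one subtle point is the label register. When $h$ has non-trivial automorphisms (for example, twin vertices), several transpositions can produce the same neighbor $\sigma_{i,j}(h)$, so the label $(i,j)$ is not recoverable from the pair $(h, \sigma_{i,j}(h))$ and cannot in general be uncomputed. I would handle this by retaining the label register as part of the edge Hilbert space, i.e.\ by running the walk on the $\binom{n}{2}$-regular labeled (Cayley-graph) version of $G^{(g)}$ that is already implicit in the previous lemma's spectral-gap bound, rather than on the simple-graph version. Summing the three contributions then gives $O(n^2) = \tO(1)$ elementary gates per QW step, as claimed.
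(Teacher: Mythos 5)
Your proposal is correct and follows essentially the same route as the paper: your decision to retain the transposition label because it cannot be uncomputed in the presence of automorphisms is exactly the paper's move to a node+coin (labeled multigraph) walk, whose spectral properties are inherited from the previous lemma, and the remaining circuit-level implementation of the shift and the reflection $V(2\,\Id\otimes\ket{0}\bra{0}-\Id)V^\dagger$ in $\mathrm{poly}(n)=\mathrm{polylog}(m)=\tO(1)$ gates matches the paper's (less explicit) argument.
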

\begin{proof}
Since we may have multi-edges, corresponding to permutations that leave the input graph invariant, we will slightly alter the QW to take place on a node+coin space (as in e.g.~\cite{aharonov2001quantum,ambainis2007quantum}) rather than on the edge space.
The relevant spectral properties from Lemma \ref{lem:QW-gap} however remain unchanged, as is easily seen by following for instance the proof of \cite{krovi2016quantum}.
We define the QW node+coin space, associated to the input graph $g$, as $\mathrm{span}_\complex\{\ket{\sigma(g),i,j} \mid \sigma\in S_n, i,j\in[n]\}$, with $S_n$ the symmetric group of permutations.
Similarly to Section \ref{sec:walks}, the QW operator $W = S R_\E$ consists of a reflection $R_\E$ around a subspace $\mathrm{span}_\complex\{\ket{\psi_{\sigma(g)}} \mid \sigma \in S_n\}$, now defined as
\[
\ket{\psi_{\sigma(g)}}
= \frac{1}{n} \sum_{i,j \in [n]} \ket{g,i,j},
\]
and the shift operator $S$ defined by $S\ket{g',i,j} = \ket{\sigma_{i,j}(g'),i,j}$.
Each of these operators can be implemented in $\tO(1)$ steps.
\end{proof}

\section{Acknowledgements}

This work greatly benefited from discussions with Alain Sarlette, Stacey Jeffery, Anthony Leverrier, Ronald de Wolf, Andr\'e Chailloux and Fr\'ed\'eric Magniez.
Part of this work was supported by the CWI-Inria International Lab.

\bibliographystyle{alpha}
\bibliography{/home/simon/Dropbox/biblio}

\newcommand{\etalchar}[1]{$^{#1}$}
\newcommand{\lName}{1}\newcommand{\arxiv}[1]{arXiv:
  \small\href{https://arxiv.org/abs/#1}{\ttfamily{#1}}\?}\def\?#1{\if.#1{}\else#1\fi}\newcommand{\doi}[1]{doi:
  \small\href{https://doi.org/#1}{\ttfamily{\nolinkurl{#1}}}\?}\newcommand{\skp}[3]{#2}\newcommand{\focs
  }[1]{\if\lName1\skp{ }{Proceedings of the #1 {IEEE} Symposium on Foundations
  of Computer Science ({FOCS})}{ }\else{FOCS}\fi}\newcommand{\stoc
  }[1]{\if\lName1\skp{ }{Proceedings of the #1 {ACM} Symposium on Theory of
  Computing ({STOC})}{ }\else{STOC}\fi}\newcommand{\soda }[1]{\if\lName1\skp{
  }{Proceedings of the #1 {ACM-SIAM} Symposium on Discrete Algorithms
  ({SODA})}{ }\else{SODA}\fi}\newcommand{\stacs }[1]{\if\lName1\skp{
  }{Proceedings of the #1 Symposium on Theoretical Aspects of Computer Science
  ({STACS})}{ }\else{STACS}\fi}\newcommand{\itcs }[1]{\if\lName1\skp{
  }{Proceedings of the #1 Innovations in Theoretical Computer Science
  Conference (ITCS)}{ }\else{ITCS}\fi}\newcommand{\fsttcs }[1]{\if\lName1\skp{
  }{Proceedings of the #1 International Conference on Foundations of Software
  Technology and Theoretical Computer Science (FSTTCS)}{
  }\else{FSTTCS}\fi}\newcommand{\ccc }[1]{\if\lName1\skp{ }{Proceedings of the
  #1 {IEEE} Conference on Computational Complexity ({CCC})}{
  }\else{CCC}\fi}\newcommand{\colt }[1]{\if\lName1\skp{ }{Proceedings of the #1
  Conference On Learning Theory (COLT)}{ }\else{COLT}\fi}\newcommand{\nips
  }[1]{\if\lName1\skp{ }{Advances in Neural Information Processing Systems #1
  ({NIPS})}{ }\else{NIPS}\fi}\newcommand{\aistats }[1]{\if\lName1\skp{
  }{Proceedings of the #1 International Conference on Artificial Intelligence
  and Statistics ({AISTATS})}{ }\else{AISTATS}\fi}\newcommand{\icalp
  }[1]{\if\lName1\skp{ }{Proceedings of the #1 International Colloquium on
  Automata, Languages, and Programming (ICALP)}{
  }\else{ICALP}\fi}\newcommand{\icml }[1]{\if\lName1\skp{ }{Proceedings of the
  #1 International Conference on Machine Learning (ICML)}{
  }\else{ICML}\fi}\newcommand{\esa }[1]{\if\lName1\skp{ }{Proceedings of the #1
  European Symposium on Algorithms (ESA)}{ }\else{ESA}\fi}\newcommand{\jacm
  }{\if\lName1\skp{ }{Journal of the ACM}{ }\else{J. ACM}\fi}\newcommand{\jams
  }{\if\lName1\skp{ }{Journal of the AMS}{ }\else{J. AMS}\fi}\newcommand{\pams
  }{\if\lName1\skp{ }{Proceedings of the AMS}{ }\else{Proc.
  AMS}\fi}\newcommand{\linalgappl }{\if\lName1\skp{ }{Linear Algebra and its
  Applications}{ }\else{Lin. Alg. \& App.}\fi}\newcommand{\jalgo
  }{\if\lName1\skp{ }{Journal of Algorithms}{ }\else{J.
  Alg.}\fi}\newcommand{\jcss }{\if\lName1\skp{ }{Journal of Computer and System
  Sciences}{}\else{J. Comp. Sys. Sci.}\fi}\newcommand{\cc }{\if\lName1\skp{
  }{Computational Complexity}{ }\else{Comp. Comp.}\fi}\newcommand{\algor
  }{\if\lName1\skp{ }{Algorithmica}{ }\else{Alg.}\fi}\newcommand{\comba
  }{\if\lName1\skp{ }{Combinatorica}{ }\else{Comb.}\fi}\newcommand{\cacm
  }{\if\lName1\skp{ }{Communications of the ACM}{ }\else{Comm.
  ACM}\fi}\newcommand{\sigart }{\if\lName1\skp{ }{SIGART Bulletin}{
  }\else{SIGART Bull.}\fi}\newcommand{\sigactn }{\if\lName1\skp{ }{SIGACT
  News}{ }\else{SIGACT News}\fi}\newcommand{\eatcsbul }{\if\lName1\skp{
  }{Bulletin of the {EATCS}}{ }\else{Bull. {EATCS}}\fi}\newcommand{\siamrev
  }{\if\lName1\skp{ }{SIAM Review}{ }\else{SIAM Rev.}\fi}\newcommand{\siamjc
  }{\if\lName1\skp{ }{SIAM Journal on Computing}{ }\else{SIAM J.
  Comp.}\fi}\newcommand{\siamjo }{\if\lName1\skp{ }{SIAM Journal on
  Optimization}{ }\else{SIAM J. Opt.}\fi}\newcommand{\siamjdm }{\if\lName1\skp{
  }{SIAM Journal on Discrete Mathematics}{ }\else{SIAM J. Disc.
  Math.}\fi}\newcommand{\siamjnum }{\if\lName1\skp{ }{SIAM Journal on Numerical
  Analysis}{ }\else{SIAM J. Num. Anal.}\fi}\newcommand{\siamjmathanal
  }{\if\lName1\skp{ }{SIAM Journal on Mathematical Analysis}{ }\else{SIAM J.
  Math. Anal.}\fi}\newcommand{\discmath }{\if\lName1\skp{ }{Discrete
  Mathematics}{ }\else{Disc. Math.}\fi}\newcommand{\das }{\if\lName1\skp{
  }{Discrete Applied Mathematics}{ }\else{Disc. App.
  Math.}\fi}\newcommand{\amatstat }{\if\lName1\skp{ }{Annals of Mathematical
  Statistics}{ }\else{Ann. Math. Stat.}\fi}\newcommand{\rms }{\if\lName1\skp{
  }{Russian Mathematical Surveys}{ }\else{Russ. Math.
  Surv.}\fi}\newcommand{\invmath }{\if\lName1\skp{ }{Inventiones Mathematicae}{
  }\else{Inv. Math.}\fi}\newcommand{\jnumber }{\if\lName1\skp{ }{Journal of
  Number Theory}{ }\else{J. Num. Th.}\fi}\newcommand{\toc }{\if\lName1\skp{
  }{Theory of Computing}{ }\else{Th. Comp.}\fi}\newcommand{\quantum
  }{\if\lName1\skp{ }{{Quantum}}{ }\else{Quant.}\fi}\newcommand{\cmp
  }{\if\lName1\skp{ }{Communications in Mathematical Physics}{ }\else{Comm.
  Math. Phys.}\fi}\newcommand{\rspa }{\if\lName1\skp{ }{Proceedings of the
  Royal Society A}{ }\else{Proc. Roy. Soc. A}\fi}\newcommand{\qic
  }{\if\lName1\skp{ }{Quantum Information and Computation}{ }\else{Quant. Inf.
  \& Comp.}\fi}\newcommand{\physrev }{\if\lName1\skp{ }{Physical Review}{
  }\else{Phys. Rev.}\fi}\newcommand{\pra }{\if\lName1\skp{ }{Physical Review
  A}{ }\else{Phys. Rev. A}\fi}\newcommand{\prb }{\if\lName1\skp{ }{Physical
  Review B}{ }\else{Phys. Rev. B}\fi}\newcommand{\pre }{\if\lName1\skp{
  }{Physical Review E}{ }\else{Phys. Rev. E}\fi}\newcommand{\prx
  }{\if\lName1\skp{ }{Physical Review X}{ }\else{Phys. Rev.
  X}\fi}\newcommand{\prl }{\if\lName1\skp{ }{Physical Review Letters}{
  }\else{Phys. Rev. Lett.}\fi}\newcommand{\physrep }{\if\lName1\skp{ }{Physics
  Reports}{ }\else{Phys. Rep.}\fi}\newcommand{\rmp }{\if\lName1\skp{ }{Reviews
  of Modern Physics}{ }\else{Rev. Mod. Phys. }\fi}\newcommand{\phystoday
  }{\if\lName1\skp{ }{Physics Today}{ }\else{Phys.
  Today}\fi}\newcommand{\physics }{\if\lName1\skp{ }{Physics}{
  }\else{Phys.}\fi}\newcommand{\nature }{\if\lName1\skp{ }{Nature}{
  }\else{Nat.}\fi}\newcommand{\natcomm }{\if\lName1\skp{ }{Nature
  Communications}{ }\else{Nat. Comm.}\fi}\newcommand{\natphys }{\if\lName1\skp{
  }{Nature Physics}{ }\else{Nat. Phys.}\fi}\newcommand{\npjqi }{\if\lName1\skp{
  }{npj Quantum Information}{ }\else{npj Quant. Inf.}\fi}\newcommand{\scirep
  }{\if\lName1\skp{ }{Scientific Reports}{ }\else{Sci.
  Rep.}\fi}\newcommand{\science }{\if\lName1\skp{ }{Science}{
  }\else{Sci.}\fi}\newcommand{\jpa }{\if\lName1\skp{ }{Journal of Physics A:
  Mathematical and Theoretical}{ }\else{J. Phys. A}\fi}\newcommand{\ijtp
  }{\if\lName1\skp{ }{International Journal of Theoretical Physics}{
  }\else{Int. J. Th. Phys.}\fi}\newcommand{\jmo }{\if\lName1\skp{ }{Journal of
  Modern Optics}{ }\else{J. Mod. Opt.}\fi}\newcommand{\jstatph
  }{\if\lName1\skp{ }{Journal of Statistical Physics}{ }\else{J. Stat.
  Phys.}\fi}\newcommand{\lncs }{\if\lName1\skp{ }{Lecture Notes in Computer
  Science}{ }\else{L. Notes Comp. Sci.}\fi}\newcommand{\lnai }{\if\lName1\skp{
  }{Lecture Notes in Artificial Intelligence}{ }\else{L. Notes Art.
  Int.}\fi}\newcommand{\lnm }{\if\lName1\skp{ }{Lecture Notes in Mathematics}{
  }\else{L. Notes Math.}\fi}\newcommand{\tams }{\if\lName1\skp{ }{Transactions
  of the American Mathematical Society}{ }\else{Trans.
  AMS}\fi}\newcommand{\ieeeit }{\if\lName1\skp{ }{{IEEE} Transactions on
  Information Theory}{ }\else{{IEEE} Trans. Inf. Th.}\fi}\newcommand{\iscs
  }{\if\lName1\skp{ }{International Series in Computer Science}{ }\else{Int.
  Ser. Comp. Sci.}\fi}\newcommand{\tocl }{\if\lName1\skp{ }{Theory of Computing
  Library}{ }\else{Th. Comp. Lib.}\fi}
\begin{thebibliography}{VAGGdW17}

\bibitem[AAKV01]{aharonov2001quantum}
Dorit Aharonov, Andris Ambainis, Julia Kempe, and Umesh Vazirani.
\newblock Quantum walks on graphs.
\newblock In {\em \stoc{33rd}}, pages 50--59. ACM, 2001.
\newblock \arxiv{quant-ph/0012090}.

\bibitem[ACL11]{ambainis2011quantum}
Andris Ambainis, Andrew~M Childs, and Yi-Kai Liu.
\newblock Quantum property testing for bounded-degree graphs.
\newblock In {\em Approximation, Randomization, and Combinatorial Optimization.
  Algorithms and Techniques}, pages 365--376. Springer, 2011.
\newblock \arxiv{1012.3174}.

\bibitem[AF02]{aldous2002reversible}
David Aldous and Jim Fill.
\newblock Reversible {M}arkov chains and random walks on graphs.
\newblock Unfinished monograph, 2002.
\newblock
  \small\href{https://www.stat.berkeley.edu/~aldous/RWG/book.pdf}{\ttfamily{link}}.

\bibitem[Amb07]{ambainis2007quantum}
Andris Ambainis.
\newblock Quantum walk algorithm for element distinctness.
\newblock {\em \siamjc}, 37(1):210--239, 2007.
\newblock \arxiv{quant-ph/0311001}.

\bibitem[AMRR11]{ambainis2011symmetry}
Andris Ambainis, Lo{\"\i}ck Magnin, Martin Roetteler, and J{\'e}r{\'e}mie
  Roland.
\newblock Symmetry-assisted adversaries for quantum state generation.
\newblock In {\em \ccc{26th}}, pages 167--177. IEEE, 2011.
\newblock \arxiv{1012.2112}.

\bibitem[AP09]{andersen2009finding}
Reid Andersen and Yuval Peres.
\newblock Finding sparse cuts locally using evolving sets.
\newblock In {\em \stoc{41st}}, pages 235--244. ACM, 2009.
\newblock \arxiv{0811.3779}.

\bibitem[AS19]{apers2018quantum}
Simon Apers and Alain Sarlette.
\newblock Quantum fast-forwarding {Markov} chains and property testing.
\newblock {\em \qic}, 19(3\&4):181--213, 2019.
\newblock \arxiv{1804.02321}.

\bibitem[ATS03]{aharonov2003adiabatic}
Dorit Aharonov and Amnon Ta-Shma.
\newblock Adiabatic quantum state generation and statistical zero knowledge.
\newblock In {\em \stoc{35th}}, pages 20--29. ACM, 2003.
\newblock \arxiv{quant-ph/0301023}.

\bibitem[Bab91]{babai1991local}
L{\'a}szl{\'o} Babai.
\newblock Local expansion of vertex-transitive graphs and random generation in
  finite groups.
\newblock In {\em \stoc{23rd}}, volume~91, pages 164--174. ACM, 1991.
\newblock \doi{10.1145/103418.103440}.

\bibitem[Bab16]{babai2016graph}
L{\'a}szl{\'o} Babai.
\newblock Graph isomorphism in quasipolynomial time.
\newblock In {\em \stoc{48th}}, pages 684--697. ACM, 2016.
\newblock \arxiv{1512.03547}.

\bibitem[Bel13]{belovs2013quantum}
Aleksandrs Belovs.
\newblock Quantum walks and electric networks.
\newblock \arxiv{1302.3143}, 2013.

\bibitem[BFR{\etalchar{+}}13]{batu2013testing}
Tu{\u{g}}kan Batu, Lance Fortnow, Ronitt Rubinfeld, Warren~D Smith, and Patrick
  White.
\newblock Testing closeness of discrete distributions.
\newblock {\em \jacm}, 60(1):4, 2013.
\newblock \arxiv{1009.5397}.

\bibitem[BHMT02]{brassard2002quantum}
Gilles Brassard, Peter H{\o}yer, Michele Mosca, and Alain Tapp.
\newblock Quantum amplitude amplification and estimation.
\newblock {\em Contemporary Mathematics}, 305:53--74, 2002.
\newblock \arxiv{quant-ph/0005055}.

\bibitem[BHT97]{brassard1997quantum}
Gilles Brassard, Peter H{\o}yer, and Alain Tapp.
\newblock Quantum algorithm for the collision problem.
\newblock {\em ACM SIGACT News (Cryptology Column)}, 28:14--19, 1997.
\newblock \arxiv{quant-ph/9705002}.

\bibitem[BR12]{belovs2012span}
Aleksandrs Belovs and Ben~W Reichardt.
\newblock Span programs and quantum algorithms for st-connectivity and claw
  detection.
\newblock In {\em \esa{20th}}, pages 193--204. Springer, 2012.
\newblock \arxiv{1203.2603}.

\bibitem[CDK{\etalchar{+}}16]{chiericetti2016sampling}
Flavio Chiericetti, Anirban Dasgupta, Ravi Kumar, Silvio Lattanzi, and
  Tam{\'a}s Sarl{\'o}s.
\newblock On sampling nodes in a network.
\newblock In {\em Proceedings of the 25th International Conference on World
  Wide Web (WWW)}, pages 471--481. International WWW Conferences, 2016.
\newblock \doi{10.1145/2872427.2883045}.

\bibitem[CMB16]{cade2016time}
Chris Cade, Ashley Montanaro, and Aleksandrs Belovs.
\newblock Time and space efficient quantum algorithms for detecting cycles and
  testing bipartiteness.
\newblock \arxiv{1610.00581}, 2016.

\bibitem[CPS15]{czumaj2015testing}
Artur Czumaj, Pan Peng, and Christian Sohler.
\newblock Testing cluster structure of graphs.
\newblock In {\em \stoc{47th}}, pages 723--732. ACM, 2015.
\newblock \arxiv{1504.03294}.

\bibitem[DHHM06]{durr2006quantum}
Christoph D{\"u}rr, Mark Heiligman, Peter H{\o}yer, and Mehdi Mhalla.
\newblock Quantum query complexity of some graph problems.
\newblock {\em \siamjc}, 35(6):1310--1328, 2006.
\newblock \arxiv{quant-ph/0401091}.

\bibitem[DS81]{diaconis1981generating}
Persi Diaconis and Mehrdad Shahshahani.
\newblock Generating a random permutation with random transpositions.
\newblock {\em Probability Theory and Related Fields}, 57(2):159--179, 1981.
\newblock \doi{10.1007/BF00535487}.

\bibitem[GR02]{goldreich1997property}
Oded Goldreich and Dana Ron.
\newblock Property testing in bounded degree graphs.
\newblock {\em \algor}, 32(2):302--343, 2002.
\newblock \doi{10.1007/s00453-001-0078-7}.

\bibitem[GR11]{goldreich2011testing}
Oded Goldreich and Dana Ron.
\newblock On testing expansion in bounded-degree graphs.
\newblock In {\em Studies in Complexity and Cryptography. Miscellanea on the
  Interplay between Randomness and Computation}, pages 68--75. Springer, 2011.
\newblock \doi{10.1007/978-3-642-22670-0_9}.

\bibitem[JJKP18]{jarret2018quantum}
Michael Jarret, Stacey Jeffery, Shelby Kimmel, and Alvaro Piedrafita.
\newblock Quantum algorithms for connectivity and related problems.
\newblock In {\em \esa{26th}}, pages 49:1--49:13. Springer, 2018.
\newblock \arxiv{1804.10591}.

\bibitem[KMOR16]{krovi2016quantum}
Hari Krovi, Fr{\'e}d{\'e}ric Magniez, Maris Ozols, and J{\'e}r{\'e}mie Roland.
\newblock Quantum walks can find a marked element on any graph.
\newblock {\em \algor}, 74(2):851--907, 2016.
\newblock \arxiv{1002.2419}.

\bibitem[KP16]{kerenidis2016quantum}
Iordanis Kerenidis and Anupam Prakash.
\newblock Quantum recommendation systems.
\newblock In {\em \itcs{8th}}, pages 49:1--49:21. Schloss
  Dagstuhl--Leibniz-Zentrum fuer Informatik, 2016.
\newblock \arxiv{1603.08675}.

\bibitem[LPW17]{levin2017markov}
David~A Levin, Yuval Peres, and Elizabeth~L Wilmer.
\newblock {\em {M}arkov chains and mixing times}.
\newblock American Mathematical Society, 2017.
\newblock \doi{10.1090/mbk/058}.

\bibitem[Lut11]{lutomirski2011component}
Andrew Lutomirski.
\newblock Component mixers and a hardness result for counterfeiting quantum
  money.
\newblock \arxiv{1107.0321}, 2011.

\bibitem[MNRS11]{magniez2011search}
Fr{\'e}d{\'e}ric Magniez, Ashwin Nayak, J{\'e}r{\'e}mie Roland, and Miklos
  Santha.
\newblock Search via quantum walk.
\newblock {\em \siamjc}, 40(1):142--164, 2011.
\newblock \arxiv{quant-ph/0608026}.

\bibitem[OBD18]{orsucci2015faster}
Davide Orsucci, Hans~J. Briegel, and Vedran Dunjko.
\newblock Faster quantum mixing for slowly evolving sequences of {M}arkov
  chains.
\newblock {\em \quantum}, 2:105, 2018.
\newblock \arxiv{1503.01334}.

\bibitem[PW09]{poulin2009sampling}
David Poulin and Pawel Wocjan.
\newblock Sampling from the thermal quantum gibbs state and evaluating
  partition functions with a quantum computer.
\newblock {\em \prl}, 103(22):220502, 2009.
\newblock \arxiv{0905.2199}.

\bibitem[Ric07]{richter2007quantum}
Peter~C Richter.
\newblock Quantum speedup of classical mixing processes.
\newblock {\em \pra}, 76(4):042306, 2007.
\newblock \arxiv{quant-ph/0609204}.

\bibitem[SBBK08]{somma2008quantum}
Rolando~D Somma, Sergio Boixo, Howard Barnum, and Emanuel Knill.
\newblock Quantum simulations of classical annealing processes.
\newblock {\em Physical Review Letters}, 101(13):130504, 2008.
\newblock \arxiv{0804.1571}.

\bibitem[Sin12]{sinclair2012algorithms}
Alistair Sinclair.
\newblock {\em Algorithms for random generation and counting: a {M}arkov chain
  approach}.
\newblock Springer Science \& Business Media, 2012.
\newblock \doi{10.1007/978-1-4612-0323-0}.

\bibitem[ST13]{spielman2013local}
Daniel~A Spielman and Shang-Hua Teng.
\newblock A local clustering algorithm for massive graphs and its application
  to nearly linear time graph partitioning.
\newblock {\em \siamjc}, 42(1):1--26, 2013.
\newblock \arxiv{0809.3232}.

\bibitem[Sze04]{szegedy2004quantum}
Mario Szegedy.
\newblock Quantum speed-up of {M}arkov chain based algorithms.
\newblock In {\em \focs{45th}}, pages 32--41. IEEE, 2004.
\newblock \arxiv{quant-ph/0401053}.

\bibitem[VAGGdW17]{van2017quantum}
Joran Van~Apeldoorn, Andr{\'a}s Gily{\'e}n, Sander Gribling, and Ronald
  de~Wolf.
\newblock Quantum {SDP}-solvers: better upper and lower bounds.
\newblock In {\em \focs{58th}}, pages 403--414. IEEE, 2017.
\newblock \arxiv{1705.01843}.

\bibitem[WA08]{wocjan2008speedup}
Pawel Wocjan and Anura Abeyesinghe.
\newblock Speedup via quantum sampling.
\newblock {\em \pra}, 78(4):042336, 2008.
\newblock \arxiv{0804.4259}.

\bibitem[Wat00]{watrous2000succinct}
John Watrous.
\newblock Succinct quantum proofs for properties of finite groups.
\newblock In {\em \focs{41st}}, pages 537--546. IEEE, 2000.
\newblock \arxiv{cs/0009002}.

\bibitem[Wat01]{watrous2001quantum}
John Watrous.
\newblock Quantum simulations of classical random walks and undirected graph
  connectivity.
\newblock {\em \jcss}, 62(2):376--391, 2001.
\newblock \arxiv{cs/9812012}.

\end{thebibliography}

\end{document}